\documentclass[sn-mathphys,Numbered]{sn-jnl}
\usepackage{geometry}
\geometry{
  a4paper,         
  textwidth=13.8cm,  
  textheight=22.9cm, 
  hratio=2:2,      
  vratio=0:1,      
}
\usepackage{leftidx}
\usepackage{graphicx}%
\usepackage{multirow}%
\usepackage[integrals]{wasysym}
\usepackage{amsmath,amssymb,amsfonts}%
\usepackage{amsthm}%
\newtheorem{theorem}{Theorem}
\newtheorem{proposition}{Proposition}%
\newtheorem{corollary}{Corollary}%
\newtheorem{remark}{Remark}%
\newtheorem{lemma}{Lemma}

  \newtheorem{definition}{Definition}%
\usepackage{mathtools}
\usepackage{mathrsfs}%
\usepackage[title]{appendix}%
\usepackage{xcolor}%
\usepackage{textcomp}%
\usepackage{manyfoot}%
\usepackage{booktabs}%
\usepackage{algorithm}%
\usepackage{algorithmicx}%
\usepackage{algpseudocode}%
\usepackage{listings}%
\usepackage{enumerate}
\usepackage{mathptmx}
\usepackage{mathrsfs}
\DeclareMathAlphabet{\mathcal}{OMS}{cmsy}{m}{n}
\usepackage{orcidlink}
\usepackage{hyperref}
\usepackage{ulem}
\hypersetup
    {
    colorlinks=true,
    linkcolor=blue,
    filecolor=red,      
    urlcolor=blue,
    linktoc=page,
    citecolor=magenta,
    }
    \usepackage{manyfoot}

\raggedbottom

\begin{document}

\title[Article Title]{Geodesic causality in Kerr spacetimes with $|a|\geq M$}

\author{Giulio Sanzeni$^{\dagger}$ \orcidlink{0009-0001-8226-8486}, Karim Mosani$^{\flat}$ \orcidlink{0000-0001-5682-1033}} 

\affil{$^{\dagger}$ Fakult\"at f\"ur Mathematik, Ruhr-Universit\"at Bochum,  Universit\"atsstra\ss e 150,  44801,  Bochum, Germany} 

\affil{$^\flat$ Fakultät für Mathematik, Universität Wien,  Oskar-Morgenstern-Platz 1, 1090 Wien, Austria}

\email{giulio.sanzeni@rub.de; karim.mosani@univie.ac.at}
\date{\today}

\abstract{
The analytic extension of the Kerr spacetimes into the negative radial region contains closed causal curves for any non-zero rotation parameter $a$ and mass parameter $M$. Furthermore, the spacetimes become totally vicious when $|a|>M$, meaning that through every point there exists a closed timelike curve. Despite this, we prove that the Kerr spacetimes do not admit any closed null geodesics when $|a|\geq M$. This result generalises recent findings by one of the authors, which showed the nonexistence of closed causal geodesics in the case $|a|<M$. Combining these results, we establish the absence of closed null geodesics in Kerr spacetimes for any non-zero $a$.
}

\keywords{Kerr spacetime, naked singularity, causality violation, closed null geodesics}
 
\maketitle
\tableofcontents

\section{Introduction}\label{sec1}

\noindent Kerr spacetimes \cite{Kerr-paper} are stationary, axisymmetric solutions of the vacuum Einstein's field equations. They are described in terms of two physical parameters: $M$,  the \textit{mass parameter},  and $a$,  the \textit{rotation parameter} (angular momentum per unit mass). In 1968, Carter \cite{Carter_causality} showed that the analytic extension of the Kerr solution in the negative radial region exhibits causality violations for all non-zero values of $a$: both timelike and null closed curves exist. However, one of the authors recently proved that in the slow-rotating ($|a|<M$) case, the analytic extension called \textit{Kerr-star spacetime} contains neither closed null geodesics \cite{sanzeni_null} nor closed timelike geodesics \cite{sanzeni_timelike}. The existence of the horizons in the slow-rotating case plays a crucial role in proving these results. The horizons are null hypersurfaces; hence, future light cones are always bent towards the centre of the black hole. Therefore, future-directed causal curves can cross the horizons at most once, so any causal curve intersecting a horizon cannot be closed (see Proposition $5.3$ of \cite{sanzeni_null}). This argument, for instance, is used to rule out the closeness of null geodesics with vanishing energy $E$, see \S $5.4$ of \cite{sanzeni_null}. Another essential element in the proofs is the existence of a causal region (see Corollary $2.17$ of \cite{sanzeni_null}), which precludes closed causal geodesics strictly contained within this subset, see Lemma $5.1$ of \cite{sanzeni_null}.

The problem of the existence of closed null geodesics in the analytic extension through the negative radial region of the Kerr spacetimes with $|a|\geq M$  remained open. When $|a|=M$, the Kerr spacetime describes a black hole with a single horizon, and it is called \textit{extreme Kerr-star spacetime}. Hence, the extreme Kerr-star spacetime differs from the slow-rotating one, which admits two distinct horizons. However, as in the slow-rotating case, there exists a causal region. Thus, it turns out that a minor modification of the proof given in \cite{sanzeni_null} is sufficient to prove the nonexistence of closed null (lightlike) geodesics in the extreme Kerr-star spacetime.

When $|a|>M$, the Kerr spacetime describes a naked singularity (see, for instance, \S 9.2 in \cite{hawking_ellis}, and also p.144 of \cite{Joshi_93(2)} for the definition of naked singularity in terms of causal boundary points as introduced by Geroch, Kronheimer and Penrose \cite{Geroch_1972}), as no horizon exists and is referred to as the \textit{Kerr-naked singularity spacetime} or \textit{fast-rotating Kerr spacetime}. Such a spacetime is known to be totally vicious (see, for instance, \cite{Clarke_1982}); namely, there exists a closed timelike curve through every point of the spacetime. Instead, in the slow-rotating case, the only vicious region is the subset $\{r<r_{-}\}$, where $r$ is the usual Boyer--Lindquist radial coordinate and $\{r=r_{-}\}$ is the Cauchy horizon of the black hole. Due to the absence of horizons and the total viciousness of the Kerr-naked singularity spacetime, the proofs from \cite{sanzeni_null,sanzeni_timelike} do not apply in their current form to the fast-rotating case. However, following the methodology of \cite{sanzeni_null}, we demonstrate the nonexistence of closed null geodesics in the Kerr-naked singularity spacetime.

As far as naked singularities are concerned, they could possibly form as an end state of gravitational collapse \cite{Penrose_naked}. In 1969, Penrose
\cite{Perose_1969}
proposed what is now known as the Cosmic Censorship Conjecture, which forbids their existence. However, there are examples of ``fairly reasonable" initial conditions that lead to their formation
\cite{Christodoulou_1984, Joshi_1993, Mosani_2020}.
A crucial aspect about naked singularities is their stability, namely, whether a small perturbation to such initial conditions still entails a naked singularity scenario. For instance, the results in \cite{Christodoulou_1999} suggest that under the hypothesis of spherical symmetry and in the presence of a massless scalar field, such singularities are not stable under arbitrarily small perturbations to the initial conditions. Nevertheless, these and similar constructions provide valuable motivation
for further investigation into the nature and formation of naked singularities in general relativity. 

Returning to the Kerr-naked singularity spacetime, in \S$66$, Ch. $6$ \cite{Chandrasekhar} Chandrasekhar posed the following question: Does such a spacetime allow causality violation along causal geodesics? Furthermore, he argued that $r$-unbounded geodesics with negative Carter constant and negative $r$-turning points may violate a particular notion of causality, as discussed in \cite{calvani_defelice}. Since no horizons exist, such geodesics could depart from a positive asymptotic $r$-region, reach a negative $r$-turning point in the region where $\partial_\phi$ is timelike, and return to the asymptotic $r$-region at an earlier $t$-coordinate (see Eq. ($384$) of \cite{Chandrasekhar}). Partial results in this direction were obtained in \cite{defelice_bini}, where the authors showed that null fly-by geodesics with negative Carter constant lying on $\theta=\textrm{const.}$ hyperplanes cannot decrease their $t$-coordinate since they reach $r$-turning points before being able to enter the region in which their $t$-coordinate could reverse. 
As a result, the authors suggest that vortical 
(i.e. with negative Carter constant) null boreal (i.e. with $\theta=\textrm{const.}$) geodesics may satisfy (some) ``causality principle". 
Observe, however, that the authors in \cite{deFelice_Calvani_violation} claimed that some particular vortical null geodesics may violate this notion of causality. Note that in our work, we adopt the standard definition of ``causality”, namely the absence of closed causal curves. Therefore, no geodesic with unbounded $r$-behaviour can violate causality by producing closed geodesics.

Finally, we observe a similarity between the Kerr-naked singularity spacetime and the G\"{o}del spacetime \cite{Godel}: both are totally vicious, yet neither of them contains closed null geodesics, Refs. \cite{Chandr_Wright,Kundt,Nolan_godel}.

\vspace{0.3cm}

\subsection*{Result}\label{subsection Result}
Given a spacetime $\big(\mathcal{M},\mathbf{g}\big)$,  \textit{i.e.} a time-oriented connected Lorentzian manifold,  and a geodesic curve $\gamma:I=[a,b]\rightarrow \mathcal{M}$,  we say that $\gamma$ is a \textit{closed geodesic}  if $\gamma(a)=\gamma(b)$ and $\gamma'(a)=\lambda\gamma'(b)\neq 0$,  for some real number $\lambda\neq 0$. The purpose of this paper is to prove the nonexistence of closed null (lightlike) geodesics in the analytic extension through the negative radial region of the Kerr spacetimes with $|a|\geq M$.
\newline

\begin{theorem}\label{main theorem}
There are no closed null (lightlike) geodesics in both the Kerr-naked singularity ($|a|>M$) spacetime and the extreme  Kerr-star ($|a|=M$) spacetime.
\end{theorem}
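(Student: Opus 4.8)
The plan is to analyze the geodesic equations in Kerr spacetime using the conserved quantities and rule out closed null geodesics case by case.

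Concretely, besides the null constraint $\mathbf{g}(\gamma',\gamma')=0$, the Killing fields $\partial_t$ and $\partial_\phi$ supply the conserved energy $E$ and angular momentum $L$, and the Killing tensor supplies the Carter constant $Q$. Passing to the Mino parameter $\tau$ with $d\tau=\Sigma^{-1}d\lambda$, where $\Sigma=r^2+a^2\cos^2\theta$, decouples the motion into the one-dimensional problems $(dr/d\tau)^2=R(r)$ and $(d\theta/d\tau)^2=\Theta(\theta)$, with $R(r)=[(r^2+a^2)E-aL]^2-\Delta[(L-aE)^2+Q]$, $\Theta(\theta)=Q+\cos^2\theta\,(a^2E^2-L^2\sin^{-2}\theta)$, and $\Delta=r^2-2Mr+a^2$. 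The hypothesis $|a|\geq M$ enters through $\Delta$: when $|a|>M$ one has $\Delta>0$ everywhere, while when $|a|=M$ there is a single degenerate horizon at $r=M$. In the extreme case I would follow \cite{sanzeni_null} almost verbatim: the lone horizon is a null hypersurface crossed at most once by any future-directed causal curve, so a causal geodesic meeting it cannot close, and the causal-region argument (Lemma~5.1 and \S5.4 of \cite{sanzeni_null}) rules out the rest. The substantive new work is the naked case $|a|>M$, where no horizon bounds the motion and the spacetime is totally vicious.

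There closedness still imposes strong constraints. Because $r(\tau)$ and $\theta(\tau)$ must return to their initial values, both are periodic, hence bounded; in particular the unbounded fly-by orbits (Chandrasekhar's candidates) are closed by no geodesic. The one remaining ingredient is the return of the $t$-coordinate: writing $dt/d\tau=(r^2+a^2)P/\Delta+aL-a^2E\sin^2\theta$ with $P=E(r^2+a^2)-aL$, a closed geodesic must satisfy $\oint(dt/d\tau)\,d\tau=0$, and the goal is to prove this impossible. A convenient reduction is that evaluating the conserved quantities at the two endpoints together with $\gamma'(b)=\lambda\gamma'(a)$ forces $\lambda=1$ as soon as $E\neq 0$, so the geodesic is genuinely periodic in those cases.

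I would then settle $E=0$ directly. Here $dt/d\tau=-2MaL\,r/\Delta$, whose sign is that of $-aL\,r$; moreover $\Theta\geq 0$ forces $Q\geq 0$, and since $R(0)=-a^2Q$ (a computation valid for every $E$), the geodesic cannot cross $r=0$ unless $Q=0$, in which case it is confined to the equatorial plane and runs into the ring singularity $\Sigma=0$. Thus for $E=0$ the radial coordinate keeps a fixed sign, $dt/d\tau$ keeps a fixed sign, $t$ is strictly monotone, and no closure is possible; the sub-case $L=0$ degenerates to a constant curve.

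The hard case is $E\neq 0$. Since $R(r)\sim E^2r^4$, bounded radial motion is trapped between two consecutive roots of the quartic $R$, and the dangerous orbits are the \emph{vortical} ones with $Q<0$, equivalently $R(0)=-a^2Q>0$: these may straddle $r=0$ and enter the region where $\partial_t$ is spacelike and $dt/d\tau$ changes sign, so the monotonicity argument of the previous step breaks down --- exactly as total viciousness demands. The point is that arbitrary causal curves can exploit this region to close up, but a geodesic is pinned by the reality condition $R(r)\geq 0$. The crux, in the spirit of \cite{defelice_bini} and \cite{sanzeni_null}, is a turning-point estimate: I would show that $R$ forces a radial turning point before the orbit can reach deep enough into the $dt/d\tau<0$ region to offset the positive contribution accumulated elsewhere, so that $\oint(dt/d\tau)\,d\tau$ retains a strict sign. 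This amounts to controlling the position of the roots of $R$ relative to the zero set of $dt/d\tau$, uniformly in the admissible $(E,L,Q)$, and handling separately the degenerate families of constant-$r$ spherical photon orbits and constant-$\theta$ boreal geodesics, for which the closure integral collapses to a single $\theta$- or $r$-average to be shown nonzero. I expect this root-location estimate to be the main obstacle.
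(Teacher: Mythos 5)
Your overall architecture matches the paper's (conserved quantities, the decoupled equations $(\rho^4 r'^2=R(r)$, $\rho^4\theta'^2=\Theta(\theta))$, a case split on $E$ and $Q$, and deferring the extreme case to \cite{sanzeni_null}), and several pieces are correct: $\lambda=1$ when $E\neq0$; the $E=0$ argument via $\rho^2 t'=-2MaLr/\Delta(r)$ with $R(0)=-a^2Q$ is a valid and slightly different alternative to the paper's route, which instead shows both roots of $R$ are non-negative and invokes the spacelike foliation $\{t=\mathrm{const}\}\cap\{r\geq0\}$ (the paper's Proposition \ref{proposition spacelike foliation}). But your treatment of the hard case $Q<0$, $E\neq0$ misidentifies the crux. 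By Proposition \ref{geod Q<0} (Proposition $4.8$ of \cite{sanzeni_null}), for null geodesics with $Q<0$ the quartic $R$ is \emph{convex} and has either no real roots or two (possibly coinciding) \emph{negative} roots; since $R\geq0$ must hold along the motion and bounded motion needs turning points, no bounded vortical orbit can ``straddle $r=0$'' at all --- the only bounded $r$-behaviour is constant $r$ at a negative radius (a double root). Hence the ``root-location estimate controlling the roots of $R$ relative to the zero set of $dt/d\tau$, uniformly in $(E,L,Q)$'' that you flag as the main obstacle addresses a configuration that cannot occur; that style of turning-point-before-the-time-machine argument is precisely the de Felice--Bini result about \emph{unbounded} fly-by geodesics, which are irrelevant to closedness. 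The genuine crux is the spherical case you mention only in passing: one parametrises $L$ and $Q$ by the fixed negative radius, computes the $t$-variation over a full $\theta$-oscillation, and proves it is \emph{strictly positive} for every negative radius via elliptic-integral/hypergeometric identities (\S$5.5.3$ and Eq.~($29$) of \cite{sanzeni_null}). That computation is the substantive content of the $Q<0$ case and is absent from your plan.

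There are two smaller gaps. For $E\neq0$, $Q>0$ you implicitly assume monotonicity of $t$ outside the vortical case, but since $R(0)=-a^2Q<0$ the orbit may be confined entirely to $\{r<0\}$, where the Carter time machine sits and $t'$ has no definite sign; the paper excludes bounded motion there by Descartes' rule of signs applied to $R(-r)$ (one sign change, hence a single negative root), and then kills the $\{r>0\}$ alternative with the spacelike foliation. You also need a word on the degenerate family $E\neq0$, $Q=0$, $L=aE$, where $R=E^2r^4$ and the only bounded motion $r\equiv0$, $\theta=\pi/2$ lands on the ring singularity $\Sigma$, which is not in the spacetime; and note that null geodesics with $E=L=0$ simply do not exist here (if $Q>0$ then $R=-Q\Delta<0$ everywhere since $\Delta>0$; if $Q=0$ then $K=E=0$, excluded by Proposition \ref{E=L=K=0 proposition}), rather than ``degenerating to a constant curve.'' Your extreme-case sketch is essentially the paper's, though the paper records one real modification you gloss over: when $E=0$, $K>0$, the subcase $L=0$, $Q>0$ yields $R=-Q\Delta$ whose only bounded motion is a restphoton at $r=M$ (excluded since restphotons have $E=L=K=0$), while $L\neq0$ yields bouncing orbits straddling the single horizon, excluded by the horizon-crossing argument.
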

\vspace{0.3cm}

The organisation of the paper is as follows. In  \S \ref{definiton of Kerr},  we introduce the Kerr metric and discuss the properties of the Kerr spacetimes.  In  \S \ref{study of geodesic equations}, we recall the equations of motion satisfied by the geodesics and study the properties of null geodesics in the Kerr-naked singularity spacetime required to prove the main theorem.  In \S \ref{section: main theorem},  we give a proof of Theorem \ref{main theorem} in the case $|a|>M$. In \S\ref{extreme case section}, we give a proof of Theorem \ref{main theorem} in the case $|a|=M$.

\section{\textbf{The Kerr-naked singularity spacetime}}\label{definiton of Kerr}

Consider $\mathbb{R}^2\times S^2$ with coordinates $(t,r)\in\mathbb{R}^2$ and $(\theta,\phi)\in S^2$.  Fix two real numbers $a\in\mathbb{R}\setminus \{0\}$,  $M\in \mathbb{R}_{>0}$  and define the functions 
\[
 \rho(r,\theta):= \sqrt{r^2+a^2\cos^2\theta}
\]
and 
\[
\Delta(r):=r^2-2Mr+a^2.
\]
\vspace{0.2cm}

\noindent We focus first on the case $|a|>M$ corresponding to the \textit{Kerr-naked singularity spacetime} (also known as \textit{fast-rotating Kerr spacetime}),  for which $\Delta(r)$ has no real roots,  in particular $\Delta(r)>0$ for all $r$.  Define the \textit{ring singularity} as

\[ \Sigma:=\{\rho(r,\theta)=0\}=\{r=0,\;\theta=\pi/2\}.\]

\noindent The {\it Kerr metric}  \cite{Kerr-paper} in {\it Boyer--Lindquist coordinates} is

\begin{align}\label{kerr metric}
\mathbf{g}:=-&\bigg(1-\frac{2 M r}{\rho^2(r,\theta)} \bigg)\:dt\otimes dt-\frac{4Mar\sin^2\theta}{\rho^2(r,\theta)}\: dt\otimes d\phi + \nonumber\\+ &\bigg(r^2+a^2+\frac{2Mra^2\sin^2\theta}{\rho^2(r,\theta)} \bigg)\sin^2\theta\:  d\phi\otimes d\phi + \frac{\rho^2(r,\theta)}{\Delta(r)}\: dr\otimes dr + \rho^2(r,\theta)\: d\theta\otimes d\theta.
\end{align}

Observe that if $|a|>M$ the vector field $V:=(r^2+a^2)\partial_t + a\partial_\phi$ is timelike throughout the manifold by Lemma $2.1.3$ of \cite{KBH_book}.  We define then a time-orientation on $(\mathbb{R}^2\times S^2,\mathbf{g})$ by declaring the vector field $V$  to be future-directed. 

\vspace{0.3cm}
\begin{definition}[\cite{sanzeni_timelike}]\label{definition kerr spacetime}
A {\it Kerr spacetime} is an analytic spacetime $(\mathcal{M}_{Kerr},g)$ such that 
\begin{enumerate}
\item[(1)]  there exists a family of open disjoint isometric embeddings $\Phi_i \colon \mathcal{B}_i\hookrightarrow \mathcal{M}_{Kerr}$ $(i\in \{1,2,3\})$ of Boyer--Lindquist (BL) blocks $(\mathcal{B}_i,
\mathbf{g}|_{\mathcal{B}_i})$, such that $\cup_{i
} \Phi_i(\mathcal{B}_i)$ is dense in $\mathcal{M}_{Kerr}$;

\item[(2)] there are analytic functions $r$ and $C$ on $\mathcal{M}_{Kerr}$  such that their restriction on each $\Phi_i(\mathcal{B}_i)$ of condition $(1)$ is $\Phi_i$-related to the BL functions $r$ and $C=\cos\theta$ on $\mathcal{B}_i$;

\item[(3)]  there is an isometry $\epsilon: \mathcal{M}_{Kerr}\rightarrow \mathcal{M}_{Kerr}$ called the \textit{equatorial isometry} whose restrictions to each BL block sends $\theta$ to $\pi-\theta$,  leaving the  other coordinates unchanged;

\item[(4)] there are Killing vector fields $\tilde{\partial}_t$ and $\tilde{\partial}_\phi$ on $\mathcal{M}_{Kerr}$ that restrict  to the BL coordinate vector fields $\partial_t$ and $\partial_\phi$ on each BL block.
\end{enumerate}
\end{definition}

\vspace{0.3cm}
\begin{definition}\label{abstract def axis and Eq}
In a Kerr spacetime 
\begin{enumerate}
\item the \textit{axis} $A=\{\theta=0,\pi\}$ is the set of zeroes of the Killing vector field $\tilde{\partial}_\phi$ as in $(4)$ of Definition \ref{definition kerr spacetime};
\item the \textit{equatorial hyperplane} $Eq=\{\theta=\pi/2\}$ is the set of fixed points of the equatorial isometry $\epsilon$ as in $(3)$ of Definition \ref{definition kerr spacetime}.
\end{enumerate}
\end{definition}
\vspace{0.3cm}
\begin{lemma}[\cite{KBH_book}, Theorem $1.7.12$]\label{A and Eq closed totally geod subman}
The axis $A$ and the equatorial hyperplane $Eq$ of a Kerr spacetime as in Definition \ref{definition kerr spacetime} are closed totally geodesic submanifolds of $\mathcal{M}_{Kerr}$.
\end{lemma}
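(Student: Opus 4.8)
The plan is to deduce both claims from the classical principle that the fixed-point set of an isometry of a Lorentzian manifold is a closed, totally geodesic submanifold. By Definition \ref{abstract def axis and Eq}, the equatorial hyperplane $Eq$ is precisely the fixed-point set $\mathrm{Fix}(\epsilon)$ of the equatorial isometry $\epsilon$, whereas the axis $A$ is the zero set of the Killing field $\tilde{\partial}_\phi$. My first step is to note that these are really the same kind of object: if $\{\psi_s\}$ denotes the (analytic) flow of $\tilde{\partial}_\phi$, then $\tilde{\partial}_\phi(p)=0$ if and only if $\psi_s(p)=p$ for all $s$, so $A=\bigcap_s \mathrm{Fix}(\psi_s)$ is an intersection of isometry fixed-point sets. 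Thus it suffices to analyse $\mathrm{Fix}(f)$ for an arbitrary isometry $f$ of $\mathcal{M}_{Kerr}$ and then specialise.

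Closedness is immediate and I would dispose of it first: the fixed-point set of a continuous self-map $f$ is the preimage of the diagonal of $\mathcal{M}_{Kerr}\times\mathcal{M}_{Kerr}$ under $p\mapsto(f(p),p)$, hence closed, and an arbitrary intersection of closed sets is closed. Therefore both $Eq$ and $A$ are closed in $\mathcal{M}_{Kerr}$.

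The core of the argument is a local normal form near a fixed point $p$ of $f$, which yields the submanifold structure and total geodesy at once. Since $f$ is an isometry fixing $p$, its differential $df_p$ is a linear isometry of $T_p\mathcal{M}_{Kerr}$, and $f$ intertwines the exponential map, $f\circ\exp_p=\exp_p\circ\,df_p$, on a normal neighbourhood $U$ of $p$. Writing $W_p:=\ker(df_p-\mathrm{id})$ for the $+1$-eigenspace, injectivity of $\exp_p$ on $U$ gives $\mathrm{Fix}(f)\cap U=\exp_p(W_p\cap\exp_p^{-1}(U))$; hence $\mathrm{Fix}(f)$ is an analytic submanifold with $T_p\,\mathrm{Fix}(f)=W_p$. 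Moreover, for $v\in W_p$ the geodesic $t\mapsto\exp_p(tv)$ satisfies $f(\exp_p(tv))=\exp_p(t\,df_p v)=\exp_p(tv)$, so it is pointwise fixed and remains inside $\mathrm{Fix}(f)$; any geodesic leaving $p$ tangent to the fixed set therefore stays in it, which is exactly total geodesy. Taking $f=\epsilon$ settles $Eq$; for $A$ one applies this to each $\psi_s$ and intersects, so that $T_pA=\bigcap_s\ker(d\psi_s|_p-\mathrm{id})=\ker(\nabla\tilde{\partial}_\phi|_p)$, giving a totally geodesic submanifold (of codimension $2$ in Kerr, the poles of the $S^2$ factor).

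I expect the submanifold step, rather than total geodesy, to be the delicate point: one must check that $\mathrm{Fix}(f)$ has locally constant dimension along each connected component and is a genuine regular submanifold, which is exactly what the $\exp_p$ normal form supplies once $df_p$ is known to be a linear isometry. On each Boyer--Lindquist block the sets $\{\theta=0,\pi\}$ and $\{\theta=\pi/2\}$ are manifestly regular submanifolds, so the remaining content is that the analytic identifications across blocks in Definition \ref{definition kerr spacetime} glue these pieces into global closed submanifolds of $\mathcal{M}_{Kerr}$; analyticity of $r$, $C$, $\epsilon$ and $\tilde{\partial}_\phi$ makes this gluing automatic, and the pointwise argument above then upgrades the conclusion to all of $\mathcal{M}_{Kerr}$.
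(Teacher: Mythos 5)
Your proof is correct and takes essentially the same route as the paper's source: the paper offers no proof of this lemma but cites Theorem $1.7.12$ of \cite{KBH_book}, where $Eq$ is treated as the fixed-point set of the isometry $\epsilon$ and $A$ as the zero set of the Killing field $\tilde{\partial}_\phi$, i.e.\ the common fixed-point set of its flow, exactly as in your $\exp_p$ normal-form argument. The technicalities you leave implicit (choosing a $df_p$-invariant star-shaped normal neighbourhood so that injectivity of $\exp_p$ forces $df_pv=v$, and working with local flows so that completeness of $\tilde{\partial}_\phi$ is never needed; note also that closedness of $A$ follows more directly from its being the zero set of a continuous vector field) are standard and do not affect correctness.
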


\vspace{0.3cm}
\begin{definition}\label{Kerr-naked spacetime}
The \textit{Kerr-naked singularity spacetime} is a Kerr spacetime as defined in \ref{definition kerr spacetime} given by the tuple $(K_N,\mathbf{g},o)$ with $K_N=\{(t,r)\in\mathbb{R}^2,\, (\theta,\phi)\in S^2\}\setminus\Sigma$,  $\mathbf{g}$ as in Eq.  \eqref{kerr metric} (extended over the axis) with $|a|>M$ and $o$ is the future time-orientation induced by the timelike vector field $V$.
\end{definition}
\vspace{0.3cm}
\begin{remark}
In the Kerr-naked singularity spacetime, the function $\Delta(r)$ appearing in the metric tensor \eqref{kerr metric} is strictly positive for all $r$.  Therefore, we can describe the whole spacetime just in terms of the Boyer--Lindquist coordinates functions, unlike the slow-rotating and extreme cases, which require the introduction of a new set of coordinates that cover the horizons, the Kerr-star coordinates (see Definition $2.9$ in \cite{sanzeni_null}). Hence, the Kerr-naked singularity spacetime consists of a single BL block.
\end{remark}

\vspace{0.3cm}

The analytic extension through the negative $r$-region of the Kerr spacetime for any $a\neq 0$ produces causality violations, as one can easily see from $\partial_\phi$ becoming timelike for negative radii sufficiently close to $r=0$, see Eq. \eqref{kerr metric}. In the Kerr-star spacetime $K^*$ with $|a|<M$, it is known that given any pair of points in the BL block $\{r<r_{-}\}$ there exists a future-directed timelike curve from one point to the other (see Proposition $2.4.7$, \cite{KBH_book}), hence through every point of $\{r<r_{-}\}$ there exists a closed timelike curve. The proof of this statement requires that the closed timelike curve in $\{r<r_{-}\}$ enters the \textit{Carter time machine} region $\mathfrak{T}:=\{p\in K^*:\mathbf{g}_p(\partial_\phi,\partial_\phi)<0\}\subset\{r<0\}$, see Lemma $2.4.8$, \cite{KBH_book}. In this Lemma one first shows the existence of future-directed timelike curve from any $p=(r_0,\theta_0,\phi_0,t_0)\in\{r<r_{-}\}=: III$ to $(r^*,\pi/2,\phi,t_0+\Delta t)\in\mathfrak{T}$, and then the existence of a past-directed timelike curve from the same $p$ to $(r^*,\pi/2,\phi,t_0-\Delta t)\in\mathfrak{T}$, for some positive $\Delta t$, and some negative $r^*$. However, in the Kerr-naked singularity case $|a|>M$, no horizon exists, so one can freely choose the point $p$ everywhere in $\mathbb{R}^2\times S^2\setminus\Sigma$ and hence find the required past-directed timelike curve. Therefore, we have the following statement.
\vspace{0.3cm}
\begin{proposition}
    The Kerr-naked singularity spacetime, as in Definition \ref{Kerr-naked spacetime}, is totally vicious, i.e. there exists a closed timelike curve through every point.
\end{proposition}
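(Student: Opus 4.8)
The plan is to adapt the argument of Lemma $2.4.8$ in \cite{KBH_book} to the horizonless setting. First I would record that the \emph{Carter time machine} region $\mathfrak{T}=\{p:\mathbf{g}_p(\partial_\phi,\partial_\phi)<0\}$ is non-empty: from Eq.~\eqref{kerr metric} one has $\mathbf{g}(\partial_\phi,\partial_\phi)=\big(r^2+a^2+\tfrac{2Mra^2\sin^2\theta}{\rho^2}\big)\sin^2\theta$, which is strictly negative on a toroidal neighbourhood of the ring singularity lying in $\{r<0\}$ with $\theta$ near $\pi/2$. There $\partial_\phi$ is timelike, and since $\phi$ is $2\pi$-periodic its orbits are closed timelike curves; hence the claim is immediate for every $p\in\mathfrak{T}$. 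The real work is to produce, through an \emph{arbitrary} $p=(t_0,r_0,\theta_0,\phi_0)\in K_N$, a closed timelike curve that dips into $\mathfrak{T}$.

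Following \cite{KBH_book}, I would assemble this curve as a concatenation of three future-directed timelike segments. The first is a future-directed timelike curve from $p$ to a point $q_+=(t_0+\Delta t,r^*,\pi/2,\phi_+)\in\mathfrak{T}$, for a suitable $\Delta t>0$ and some $r^*<0$. The second lies entirely in $\mathfrak{T}$ and runs from $q_+$ to $q_-=(t_0-\Delta t,r^*,\pi/2,\phi_-)$: because $\partial_\phi$ is timelike, a small tilt of the future cone still contains directions of either sign in $dt$, so a future-directed timelike curve may wind in $\phi$ while \emph{decreasing} its $t$-coordinate, reaching the same $(r^*,\pi/2)$ at the lower time $t_0-\Delta t$ (choosing $\Delta t$ to match the decrease achievable over one or more windings). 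This is the essential time-machine mechanism. The third is a future-directed timelike curve from $q_-$ back to $p$, equivalently a past-directed timelike curve from $p$ to $q_-$. Concatenating $p\to q_+\to q_-\to p$ then yields a closed timelike curve through $p$.

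The hard part, and the only point that genuinely departs from the slow-rotating case, will be guaranteeing the third segment for every $p$. In the Kerr-star spacetime the analogous construction is confined to the block $III=\{r<r_-\}$ precisely because the horizons are null hypersurfaces that a future-directed causal curve crosses at most once, obstructing the required return to $\mathfrak{T}$. Here, as noted in the Remark following Definition~\ref{Kerr-naked spacetime}, $\Delta(r)>0$ for all $r$, so there are no horizons and the spacetime is a single BL block in which causal travel between any two $r$-values is unobstructed. I would therefore argue that both the future-directed segment into $\mathfrak{T}$ and the past-directed segment from $p$ to $\mathfrak{T}$ exist for every $p\in K_N$ --- this is exactly the freedom asserted in the discussion preceding the statement --- producing a closed timelike curve through each point and establishing total viciousness.
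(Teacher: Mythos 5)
Your proposal is correct and takes essentially the same approach as the paper: the paper's proof is precisely the adaptation of Lemma $2.4.8$ of \cite{KBH_book} that you describe, concatenating a future-directed timelike segment from an arbitrary $p$ into the Carter time machine $\mathfrak{T}$, a $t$-decreasing timelike segment winding in $\phi$ inside $\mathfrak{T}$, and a past-directed timelike segment from $p$ to $\mathfrak{T}$, with the absence of horizons ($\Delta(r)>0$ for all $r$, a single BL block) removing the restriction to the block $III=\{r<r_-\}$ that applies in the slow-rotating case. Your observation that the return segment is the only point where the horizonless structure genuinely enters matches the paper's reasoning exactly.
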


\section{\textbf{Geodesics in Kerr-naked singularity spacetime}}\label{study of geodesic equations}


Recall that the Kerr-naked singularity spacetime $(K_N,\mathbf{g})$ admits two Killing vector fields $\partial_t$ and $\partial_\phi$.
\vspace{0.3cm}
\begin{definition}[\textit{Energy and angular momentum}] 
For a geodesic $\gamma$ of $(K_N,\mathbf{g})$,  the constants of motion 

\[
E=E(\gamma):=-\mathbf{g}(\gamma',\partial_t) \quad
\text{and} \quad
L=L(\gamma):=\mathbf{g}(\gamma',\partial_\phi)
\]
are called its {\it energy} and its {\it angular momentum (around the axis of rotation)}, respectively.
\end{definition}

\vspace{0.3cm}

\begin{definition}
The \textit{canonical vector fields} are
\[
V:=(r^2+a^2)\partial_t + a\partial_\phi\quad \text{and }\quad W:=\partial_\phi + a \sin^2\theta\,\partial_t.
\]
\end{definition}

\begin{definition}\label{definitions of P and D}
Let $\gamma$ be a geodesic with energy $E$ and angular momentum $L$.  Define the functions $\mathbb{P}$ and $\mathbb{D}$ along $\gamma$ by
\[
\mathbb{P}(r):=-\mathbf{g}(\gamma',V)=(r^2+a^2)E-La
\]
and 
\[
\mathbb{D}(\theta):=\mathbf{g}(\gamma',W)=L-Ea\sin^2\theta.
\]
\end{definition}

\vspace{0.3cm}

\noindent A geodesic in a Kerr spacetime has two additional constants of motions: the {\it Lorentzian energy} $q:=\mathbf{g}(\gamma',\gamma')$ and $K$ \cite{Carter_causality}, which can be defined (see Ch.  $7$ in \cite{Chandrasekhar}) by

\[
K:=2\rho^2(r,\theta)\mathbf{g}(l,\gamma')\mathbf{g}(n,\gamma')+r^2q,
\]
where $l=\frac{1}{\Delta(r)}V+\partial_r$ and $n=\frac{1}{2\rho^2(r,\theta)}V-\frac{\Delta(r)}{2\rho^2(r,\theta)}\partial_r$.

\vspace{0.3cm}

\begin{definition}[\textit{Carter constant}]
On a Kerr spacetime,  the constant of motion
\[
Q:=K-(L-aE)^2\hspace{0.5cm}\textrm{or}\hspace{0.5cm} \mathcal{Q}:=Q/E^2\hspace{0.3cm} \textrm{if}\hspace{0.3cm}E\neq 0
\]
is called the Carter constant.  
\end{definition}

\subsection*{Equations of motion}

\begin{proposition}[\cite{KBH_book},  Proposition $4.1.5$,  Theorem $4.2.2$] \label{differential equations of geodessics}
Let $\gamma$ be a geodesic with constants of motion $E,  L,  Q,  q$.   Then the components of $\gamma$ in the BL coordinates $(t,r,\theta,\phi)$ satisfy the following set of \textit{first} order differential equations

\begin{align}
\begin{cases}
 \rho^2(r,\theta)\phi'=\frac{\mathbb{D}(\theta)}{\sin^2\theta}+a\frac{\mathbb{P}(r)}{\Delta(r)} \\ \rho^2(r,\theta) t'= a\mathbb{D}(\theta) + (r^2+a^2)\frac{\mathbb{P}(r)}{\Delta(r)} \label{geodes diff equations}\\ \rho^4(r,\theta) r'^2 = R(r)\\  \rho^4(r,\theta) \theta'^2 = \Theta (\theta)   
 \end{cases}
\end{align}
where

\begin{align*}
R(r):=&\Delta(r)\left[(qr^2-K(E,L,Q)\right]+\mathbb{P}^2(r) \\ 
=& (E^2+q)r^4 -2Mqr^3 + \mathfrak{X}(E,L,Q) r^2 + 2MK(E,L,Q)r - a^2 Q,\label{other form of R(r)}\\
\Theta(\theta):=& K(E,L,Q)+qa^2 \cos^2\theta -\frac{ \mathbb{D}(\theta)^2}{\sin^2\theta} \\
=& Q + \cos^2\theta \left[ a^2(E^2+q)-L^2/\sin^2\theta\right],
\end{align*}
with
\[
\mathfrak{X}(E,L,Q):=a^2(E^2+q)-L^2-Q\text{\quad and\quad} K(E,L,Q)=Q+(L-aE)^2.
\]

\end{proposition}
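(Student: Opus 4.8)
The plan is to extract the four first-order equations from four independent constants of motion, thereby reducing the coupled second-order geodesic system to quadratures. Two of the constants are immediate: since $\partial_t$ and $\partial_\phi$ are Killing (condition $(4)$ of Definition \ref{definition kerr spacetime}), the functions $E=-\mathbf{g}(\gamma',\partial_t)$ and $L=\mathbf{g}(\gamma',\partial_\phi)$ are constant along any geodesic $\gamma$, and the Lorentzian energy $q=\mathbf{g}(\gamma',\gamma')$ is constant because $\frac{d}{ds}\mathbf{g}(\gamma',\gamma')=2\mathbf{g}(\gamma'',\gamma')=0$ by the geodesic equation. The fourth constant is the Carter constant $K$; its conservation is the only genuinely nontrivial input, and I isolate it as the main obstacle below.

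First I would derive the equations for $t'$ and $\phi'$. Writing $\gamma'=t'\partial_t+r'\partial_r+\theta'\partial_\theta+\phi'\partial_\phi$ and reading the components $\mathbf{g}_{tt},\mathbf{g}_{t\phi},\mathbf{g}_{\phi\phi}$ off \eqref{kerr metric}, the defining relations for $E$ and $L$ form a linear $2\times2$ system in $(t',\phi')$ whose coefficient determinant is the classical identity $\mathbf{g}_{tt}\mathbf{g}_{\phi\phi}-\mathbf{g}_{t\phi}^{2}=-\Delta\sin^2\theta$. Inverting this system and rewriting the result through $\mathbb{P}=(r^2+a^2)E-La$ and $\mathbb{D}=L-Ea\sin^2\theta$ yields exactly the first two equations of \eqref{geodes diff equations}.

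Next the radial equation comes essentially for free from the very definition of $K$. Using $\mathbf{g}(\gamma',V)=-\mathbb{P}$ and $\mathbf{g}(\gamma',\partial_r)=\mathbf{g}_{rr}r'=(\rho^2/\Delta)r'$, I would compute $\mathbf{g}(l,\gamma')=-\mathbb{P}/\Delta+(\rho^2/\Delta)r'$ and $\mathbf{g}(n,\gamma')=-\mathbb{P}/(2\rho^2)-r'/2$; substituting into $K=2\rho^2\mathbf{g}(l,\gamma')\mathbf{g}(n,\gamma')+r^2q$ makes the cross terms cancel and collapses to $K=\mathbb{P}^2/\Delta-\rho^4 r'^2/\Delta+r^2q$, that is $\rho^4 r'^2=\mathbb{P}^2+\Delta(qr^2-K)=R(r)$. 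For the angular equation I would use the normalization $q=\mathbf{g}(\gamma',\gamma')$. Since $\partial_r$ and $\partial_\theta$ are $\mathbf{g}$-orthogonal to $\partial_t$ and $\partial_\phi$, the $t$--$\phi$ part of $\mathbf{g}(\gamma',\gamma')$ equals $-Et'+L\phi'$; inserting the already-established expressions for $t'$ and $\phi'$ and simplifying (the mixed terms regroup as $-\mathbb{P}^2/\Delta$ and $\mathbb{D}^2/\sin^2\theta$) gives $\rho^2 q=(\rho^4/\Delta)r'^2+\rho^4\theta'^2+\mathbb{D}^2/\sin^2\theta-\mathbb{P}^2/\Delta$. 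Finally I would eliminate the radial term via the radial equation just obtained and use $\rho^2=r^2+a^2\cos^2\theta$; everything depending on $r$ cancels and one is left with $\rho^4\theta'^2=K+qa^2\cos^2\theta-\mathbb{D}^2/\sin^2\theta=\Theta(\theta)$, as claimed.

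The hard part is the conservation of $K$, which cannot be attributed to any continuous isometry. The plan is to exhibit the irreducible Killing tensor $\mathcal{K}_{ab}$ of the Kerr metric, namely the symmetric $(0,2)$-tensor satisfying $\nabla_{(a}\mathcal{K}_{bc)}=0$ whose contraction $\mathcal{K}_{ab}\gamma'^a\gamma'^b$ reproduces $K=2\rho^2\mathbf{g}(l,\gamma')\mathbf{g}(n,\gamma')+r^2q$, and then to conclude $\frac{d}{ds}\big(\mathcal{K}_{ab}\gamma'^a\gamma'^b\big)=\nabla_{(a}\mathcal{K}_{bc)}\gamma'^a\gamma'^b\gamma'^c=0$ along the geodesic. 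Equivalently, one may run Carter's original argument: write the geodesic Hamiltonian $\tfrac12\mathbf{g}^{ab}p_ap_b$ in BL coordinates, observe that the associated Hamilton--Jacobi equation separates additively into an $r$-part and a $\theta$-part, and identify the separation constant with $K$. Either route reduces to a direct but somewhat lengthy computation with the inverse metric; this verification, rather than the algebra of the preceding paragraphs, is where the substance of the proposition lies.
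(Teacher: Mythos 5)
Your proposal is correct and follows essentially the same route as the paper's source: the statement is quoted from O'Neill's book (Proposition $4.1.5$ and Theorem $4.2.2$), where likewise the $t'$, $\phi'$ equations come from inverting the linear relations defining $E$ and $L$, the radial equation is algebraic given the definition $K=2\rho^2\mathbf{g}(l,\gamma')\mathbf{g}(n,\gamma')+r^2q$, the $\theta$-equation follows from the normalization $q=\mathbf{g}(\gamma',\gamma')$, and the single substantive input is the constancy of $K$ (Carter's theorem), which you correctly isolate and reduce to the standard Killing-tensor or Hamilton--Jacobi separation argument. Your intermediate computations --- the determinant identity $\mathbf{g}_{tt}\mathbf{g}_{\phi\phi}-\mathbf{g}_{t\phi}^2=-\Delta\sin^2\theta$, the values of $\mathbf{g}(l,\gamma')$ and $\mathbf{g}(n,\gamma')$, the cancellation yielding $\rho^4 r'^2=\mathbb{P}^2+\Delta(qr^2-K)$, and the elimination of the radial term via $\rho^2-r^2=a^2\cos^2\theta$ to obtain $\Theta(\theta)$ --- all check out.
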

\vspace{0.3cm}
\begin{remark}
Notice that $\Theta(\theta)$ is also well-defined if the null geodesic crosses $A=\{\theta=0,\pi\}$.  Indeed,  $L=0$ since $\partial_\phi\equiv 0$ on $A$. Hence $\mathbb{D}(\theta)=-Ea\sin^2\theta$. So
  
\[ 
\Theta(\theta)=K(E,0,Q)-(-Ea\sin^2\theta)^2/\sin^2\theta=Q+a^2E^2-a^2E^2\sin^2\theta=Q+a^2E^2\cos^2\theta.\\
\]

\end{remark} 

Since in the third and in the fourth differential equations of Proposition \ref{differential equations of geodessics} the left-hand sides are clearly non-negative,  we observe that the functions $R(r)$ and $\Theta(\theta)$ are non-negative along the geodesics.  Hence, the geodesic motion can only happen in the $r,\theta$-region for which $R(r),\Theta(\theta)\geq 0$. The non-negativity of $R(r)$ and $\Theta(\theta)$ can be used to study the dynamics of the $r,\theta$-coordinates of the geodesics,  together with the next proposition.
\vspace{0.3cm}

\begin{proposition}[\cite{KBH_book},  Corollary $4.3.8$]\label{initial conditions and zeroes}
Suppose $R(r_0)=0$.  Let $\gamma$ be a geodesic whose $r$-coordinate satisfies the initial conditions $r(\gamma(s_0))=r_0$ and $(r\circ\gamma)'(s_0)=0$.
\begin{enumerate}
\item If $r_0$ is a multiplicity one zero of $R(r)$,  \textit{i.e.} $R'(r_0)\neq 0$,  then $r_0$ is an $r$-turning point,  namely $(r\circ\gamma)'(s)$ changes sign at $s_0$.
\item If $r_0$ is a higher order zero of $R(r)$,  \textit{i.e.} at least $R'(r_0)= 0$,  then $\gamma$ has constant $(r\circ\gamma)(s)=r_0$.  
\end{enumerate}
Analogous results hold for $r$ and $R(r)$ replaced by $\theta$ and $\Theta(\theta)$.  
\end{proposition}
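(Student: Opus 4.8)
The plan is to decouple the radial motion from the polar motion by passing to the Mino time parameter $\lambda$, defined by $d\lambda/ds = 1/\rho^2(r,\theta)$. This reparametrisation is legitimate on all of $K_N$ because $\rho^2 = r^2 + a^2\cos^2\theta$ vanishes only on the ring singularity $\Sigma$, which has been excised, so $\rho^2 > 0$ everywhere and $\lambda$ is an analytic, strictly increasing function of $s$. Writing $\dot{}\, = d/d\lambda$ and using $dr/d\lambda = \rho^2 r'$, the third equation of Proposition \ref{differential equations of geodessics} becomes the autonomous first integral
\[
\left(\frac{dr}{d\lambda}\right)^2 = R(r),
\]
in which the factor $\rho^4$ has been absorbed and every trace of $\theta$-dependence has disappeared.

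First I would upgrade this first integral to the genuine second-order radial equation
\[
\frac{d^2 r}{d\lambda^2} = \tfrac12 R'(r),
\]
which is the radial geodesic equation written in Mino time. On the open set where $dr/d\lambda \neq 0$ this follows simply by differentiating the first integral and cancelling the factor $2\,dr/d\lambda$; since $K_N$ is analytic the geodesic is an analytic curve and $r(\lambda)$ is analytic, so either $r$ is constant or $dr/d\lambda$ has only isolated zeros, and in the latter case the equation extends by continuity across the turning points to all values of $\lambda$. I regard establishing this second-order equation \emph{unconditionally} as the main technical point: at a simple zero the function $\sqrt{R(r)}$ fails to be Lipschitz, so the first integral alone does not enforce uniqueness and does not by itself exclude a spurious constant-$r$ solution sitting at $r_0$. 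It is precisely here that one must appeal to the underlying smooth (second-order) geodesic system rather than to the first integral in isolation.

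With the second-order equation in hand, both conclusions become elementary statements of one-dimensional mechanics for a particle of position $r(\lambda)$ in the potential $-\tfrac12 R$. For part (2) a higher-order zero means $R'(r_0)=0$, so the constant map $r \equiv r_0$ solves the second-order equation with the prescribed data $r(\lambda_0)=r_0$, $dr/d\lambda(\lambda_0)=0$; as $R'$ is a polynomial, hence locally Lipschitz, Picard--Lindel\"of uniqueness forces $r(\lambda)\equiv r_0$, and the monotonicity of $\lambda(s)$ transfers this to $(r\circ\gamma)(s)\equiv r_0$. (Alternatively, the bound $\sqrt{R(r)}\le C\,|r-r_0|$ valid near a double zero gives the same conclusion by a Gronwall estimate applied directly to the first integral.) For part (1) a simple zero gives $d^2 r/d\lambda^2(\lambda_0)=\tfrac12 R'(r_0)\neq 0$ together with $dr/d\lambda(\lambda_0)=0$, so $\lambda_0$ is a strict local extremum of $r(\lambda)$; hence $dr/d\lambda$, and therefore $(r\circ\gamma)'$, changes sign at the parameter value corresponding to $s_0$, which is exactly the assertion that $r_0$ is a turning point.

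Finally, the polar statement is proved verbatim: the fourth equation of Proposition \ref{differential equations of geodessics} becomes $(d\theta/d\lambda)^2 = \Theta(\theta)$ in Mino time, with second-order form $d^2\theta/d\lambda^2 = \tfrac12\Theta'(\theta)$, and the same dichotomy between simple and higher-order zeros of $\Theta$ applies. The only extra care needed is at the axis $\theta\in\{0,\pi\}$, where $\Theta$ remains well defined by the Remark following Proposition \ref{differential equations of geodessics}, so the argument carries over unchanged.
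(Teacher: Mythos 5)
The paper offers no proof of this proposition --- it is quoted from \cite{KBH_book}, Corollary $4.3.8$ --- so your attempt must stand on its own. Your overall strategy (Mino time, upgrade to the second-order radial equation, one-dimensional mechanics) is the standard route, and your part (2) is sound: either $r$ is constant, which is the desired conclusion, or it is non-constant, in which case your differentiation-plus-analyticity argument does establish $d^2r/d\lambda^2=\tfrac12 R'(r)$ (the continuity extension across isolated zeros of $dr/d\lambda$ is legitimate), and Picard--Lindel\"of, or your Gronwall variant, yields the contradiction.

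The gap is in part (1), and it is exactly the issue you flagged yourself and then did not close. Your derivation of the second-order equation is conditional: cancelling $2\,dr/d\lambda$ works only where $dr/d\lambda\neq 0$, and the analyticity/continuity extension applies only to the non-constant branch of your dichotomy. For the constant branch $r\equiv r_0$ nothing you derived applies (such a curve has $d^2r/d\lambda^2=0$, and your argument does not show this forces $R'(r_0)=0$). So when in part (1) you evaluate $d^2r/d\lambda^2(\lambda_0)=\tfrac12R'(r_0)\neq0$, you are presupposing that the geodesic is non-constant in $r$ --- but excluding the ``spurious constant-$r$ solution sitting at $r_0$'' is precisely what part (1) must accomplish at a simple zero, so the argument is circular at the decisive point; your appeal to ``the underlying second-order geodesic system'' is announced but never carried out. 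The missing step is short: derive $d^2r/d\lambda^2=\tfrac12R'(r)$ for \emph{every} geodesic from Hamilton's equations with Carter's separated Hamiltonian. Writing $2\rho^2H=\Delta p_r^2-\mathbb{P}^2/\Delta+p_\theta^2+\mathbb{D}^2/\sin^2\theta$, one has $dr/d\lambda=\Delta p_r$ and $dp_r/d\lambda=-\rho^2\,\partial_rH$, and using $2H=q$ a two-line computation gives
\[
\frac{d^2r}{d\lambda^2}=\frac{\Delta'}{2}\left(qr^2-K\right)+\mathbb{P}\,\mathbb{P}'+\Delta qr=\frac12\,R'(r),
\]
with no non-constancy hypothesis. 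With this in hand, a constant solution at a simple zero gives $0=\tfrac12R'(r_0)\neq0$, the constant branch is excluded, and your extremum/sign-change argument completes part (1); the same remark repairs the corresponding step for $\theta$ and $\Theta(\theta)$.
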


\subsection*{\textbf{Properties of null geodesics in Kerr-naked singularity spacetime}}\label{section: properites of null geodesics}

Since the vector fields $V, W, \partial_r,\partial_\theta$ are mutually orthogonal,  the tangent vector to a geodesic $\gamma$ can be decomposed as $\gamma'=\gamma'_\Pi + \gamma'_\perp$ where $\Pi:= span \{ \partial_r,  V \}$ (timelike plane) and $\Pi^\perp:= span \{ \partial_\theta,  W \}$ (spacelike plane).
\vspace{0.3cm}
\begin{definition}
A Kerr geodesic $\gamma$ is said to be \textit{principal} if $\gamma' = \gamma'_\Pi$. 
\end{definition}
\vspace{0.3cm}
\begin{proposition}[\cite{KBH_book}, Lemma $4.2.7$]\label{expression for K}
    Let $K$ and $q$ be constants of motion associated to a geodesic $\gamma$. Then $K=\rho^2(r,\theta)\mathbf{g}(\gamma'_{\perp},\gamma'_{\perp})-qa^2\cos^2\theta.$
\end{proposition}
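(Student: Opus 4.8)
The plan is to work directly from the definition $K=2\rho^2\,\mathbf{g}(l,\gamma')\mathbf{g}(n,\gamma')+r^2q$, exploiting that the principal null vectors $l,n$ both lie in the timelike plane $\Pi$, so that they see only the $\Pi$-component of $\gamma'$. Writing $\gamma'=\gamma'_\Pi+\gamma'_\perp$ with $\gamma'_\Pi\in\Pi$ and $\gamma'_\perp\in\Pi^\perp$, the mutual orthogonality of $V,W,\partial_r,\partial_\theta$ recorded above gives $\mathbf{g}(l,\gamma')=\mathbf{g}(l,\gamma'_\Pi)$ and $\mathbf{g}(n,\gamma')=\mathbf{g}(n,\gamma'_\Pi)$, and splits the Lorentzian energy as $q=\mathbf{g}(\gamma'_\Pi,\gamma'_\Pi)+\mathbf{g}(\gamma'_\perp,\gamma'_\perp)$. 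The whole identity then reduces to a two-dimensional fact about $\Pi$.

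First I would verify that $l$ and $n$ form a null basis of $\Pi$ normalised by $\mathbf{g}(l,n)=-1$. Using $\mathbf{g}(V,\partial_r)=0$ together with $\mathbf{g}(\partial_r,\partial_r)=\rho^2/\Delta$ and $\mathbf{g}(V,V)=-\Delta\rho^2$, a short computation gives $\mathbf{g}(l,l)=\mathbf{g}(n,n)=0$ and $\mathbf{g}(l,n)=-1$. In any Lorentzian $2$-plane with a null basis $l,n$ satisfying $\mathbf{g}(l,n)=-1$, every $X\in\Pi$ obeys the algebraic identity
\[
2\,\mathbf{g}(l,X)\,\mathbf{g}(n,X)=-\mathbf{g}(X,X),
\]
as one checks by expanding $X=\lambda l+\mu n$. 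Applying this to $X=\gamma'_\Pi$ yields $2\,\mathbf{g}(l,\gamma'_\Pi)\mathbf{g}(n,\gamma'_\Pi)=-\mathbf{g}(\gamma'_\Pi,\gamma'_\Pi)$.

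Substituting into the definition of $K$ and inserting the splitting of $q$, I obtain
\[
K=-\rho^2\,\mathbf{g}(\gamma'_\Pi,\gamma'_\Pi)+r^2q=-\rho^2\big(q-\mathbf{g}(\gamma'_\perp,\gamma'_\perp)\big)+r^2q=\rho^2\,\mathbf{g}(\gamma'_\perp,\gamma'_\perp)-(\rho^2-r^2)q.
\]
Since $\rho^2-r^2=a^2\cos^2\theta$ by the definition of $\rho$, this is precisely $K=\rho^2\,\mathbf{g}(\gamma'_\perp,\gamma'_\perp)-qa^2\cos^2\theta$, as claimed.

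The only genuinely computational step, and hence the main obstacle, is establishing $\mathbf{g}(V,V)=-\Delta\rho^2$ from the Boyer--Lindquist form of the metric: the cancellations among the $g_{tt},g_{t\phi},g_{\phi\phi}$ contributions (which collapse via the perfect square $r^2+a^2\cos^2\theta=\rho^2$) are exactly what make $l,n$ null and fix the normalisation $\mathbf{g}(l,n)=-1$; everything else is bookkeeping with the orthogonal frame. An alternative route sidesteps this computation by invoking the fourth equation of motion $\rho^4\theta'^2=\Theta(\theta)$ and $\mathbf{g}(\gamma',W)=\mathbb{D}(\theta)$ to evaluate $\rho^2\mathbf{g}(\gamma'_\perp,\gamma'_\perp)=\rho^4\theta'^2+\mathbb{D}(\theta)^2/\sin^2\theta=\Theta(\theta)+\mathbb{D}(\theta)^2/\sin^2\theta$, which reduces to $K+qa^2\cos^2\theta$ by the definition of $\Theta$; this is faster but relies on the equations of motion rather than on the definition of $K$ directly.
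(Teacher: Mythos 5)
Your proof is correct, and it coincides with the standard argument behind this result: the paper itself does not prove the proposition but imports it verbatim from Lemma $4.2.7$ of \cite{KBH_book}, whose proof rests on exactly your ingredients --- $l,n$ forming a null basis of $\Pi$ normalised by $\mathbf{g}(l,n)=-1$ (via $\mathbf{g}(V,V)=-\Delta\rho^2$, $\mathbf{g}(\partial_r,\partial_r)=\rho^2/\Delta$, $\mathbf{g}(V,\partial_r)=0$), the two-dimensional identity $2\,\mathbf{g}(l,X)\,\mathbf{g}(n,X)=-\mathbf{g}(X,X)$ for $X\in\Pi$, and the orthogonal splitting $q=\mathbf{g}(\gamma'_\Pi,\gamma'_\Pi)+\mathbf{g}(\gamma'_\perp,\gamma'_\perp)$ combined with $\rho^2-r^2=a^2\cos^2\theta$. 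All your computations check out (including $\mathbf{g}(V,V)=-\Delta\rho^2$, where the cross terms collapse via $(r^2+a^2-a^2\sin^2\theta)^2=\rho^4$), so nothing further is needed; your alternative route through $\rho^4\theta'^2=\Theta(\theta)$ is also valid but, as you note, presupposes the equations of motion, which in \cite{KBH_book} are themselves derived downstream of this lemma.
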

\vspace{0.3cm}
\begin{corollary}[\cite{KBH_book},  Corollary $4.2.8(2)$] \label{eq principal nulls}
If $\gamma$ is a null geodesic,  then $K\geq 0$,  and $K=0$ if and only if $\gamma$ is principal.
\end{corollary}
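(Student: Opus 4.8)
The plan is to obtain the statement as an immediate consequence of Proposition~\ref{expression for K}. Since $\gamma$ is a null geodesic its Lorentzian energy vanishes, $q=\mathbf{g}(\gamma',\gamma')=0$, so the identity $K=\rho^2(r,\theta)\,\mathbf{g}(\gamma'_{\perp},\gamma'_{\perp})-qa^2\cos^2\theta$ collapses to $K=\rho^2(r,\theta)\,\mathbf{g}(\gamma'_{\perp},\gamma'_{\perp})$. Because $\rho^2(r,\theta)>0$ on all of $K_N$ (the locus $\rho=0$ is precisely the excised ring $\Sigma$), the sign of $K$ is governed entirely by the sign of $\mathbf{g}(\gamma'_{\perp},\gamma'_{\perp})$, where $\gamma'_{\perp}$ is the component of $\gamma'$ in $\Pi^{\perp}=\mathrm{span}\{\partial_\theta,W\}$.

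The one substantive point is therefore to verify that $\mathbf{g}$ restricted to $\Pi^{\perp}$ is positive semidefinite, justifying the label ``spacelike plane''. I would compute directly from \eqref{kerr metric} and the definition of $W$ the three inner products $\mathbf{g}(\partial_\theta,\partial_\theta)=\rho^2(r,\theta)$, $\mathbf{g}(\partial_\theta,W)=0$ and $\mathbf{g}(W,W)=\rho^2(r,\theta)\sin^2\theta$; the mixed term vanishes because $\partial_\theta$ carries no $dt$ or $d\phi$ component, and the last identity follows after the cross terms cancel. The resulting Gram matrix is diagonal with non-negative entries, so $\mathbf{g}(\gamma'_{\perp},\gamma'_{\perp})\geq 0$ and hence $K\geq 0$.

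For the equivalence, $K=0$ forces $\mathbf{g}(\gamma'_{\perp},\gamma'_{\perp})=0$; off the axis both diagonal entries are strictly positive, so this means $\gamma'_{\perp}=0$, i.e.\ $\gamma'=\gamma'_{\Pi}$, which is exactly the definition of a principal geodesic, and the converse is trivial. The only delicate spot is the axis $A=\{\theta=0,\pi\}$, where $\sin\theta=0$ makes $\mathbf{g}(W,W)$ degenerate; there, however, $W=\partial_\phi$ itself vanishes, so the $W$-part of $\gamma'_{\perp}$ is already the zero vector and the remaining $\partial_\theta$-direction stays spacelike, leaving the equivalence intact. I do not expect any genuine obstacle: once Proposition~\ref{expression for K} and the positive-definiteness of $\mathbf{g}|_{\Pi^{\perp}}$ are in hand, the corollary is essentially a one-line deduction, the only care being the consistent treatment of the projection on and off the axis.
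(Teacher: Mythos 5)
Your proposal is correct and follows exactly the intended route: the paper itself gives no proof but cites \cite{KBH_book}, Corollary $4.2.8(2)$, whose argument is precisely your deduction from Proposition~\ref{expression for K} with $q=0$, together with the positive semidefiniteness of $\mathbf{g}$ on $\Pi^{\perp}=\mathrm{span}\{\partial_\theta,W\}$ (Gram matrix $\mathrm{diag}(\rho^2,\rho^2\sin^2\theta)$, which your computation verifies correctly). Your care at the axis, where $W=\partial_\phi$ vanishes but the orthogonal complement of $\Pi$ remains spacelike, is also the right way to keep the equivalence $K=0\iff\gamma'_{\perp}=0$ intact there.
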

\vspace{0.3cm}
\begin{remark}
In the slow-rotating $|a|<M$ Kerr spacetime,  the function $\Delta(r)$ in the metric tensor \eqref{kerr metric} can vanish at some positive roots $r=r_{\pm}$. At these radii, there exist null hypersurfaces called horizons $\mathscr{H}_{\pm}=\{r=r_{\pm}\}$. Null geodesics lying in the horizons are called \textit{restphotons},  and they are integral curves of  $V|_{\mathscr{H}_{\pm}}$ by Proposition $2.5.5$ of \cite{KBH_book}. Indeed, when $|a|<M$, the vector field $V$ becomes null on the horizons.
In the Kerr-naked singularity spacetime, however, there are no horizons,  hence no restphotons.
\end{remark}
\vspace{0.3cm}
\begin{proposition}\label{E=L=K=0 proposition}
Let $\gamma$ be a null geodesic in the Kerr-naked singularity spacetime.  If $\textrm{Im}~\gamma\subset A$,  then $K=L=0$. Moreover,  there are no null geodesics with $K=E=0.$
\end{proposition}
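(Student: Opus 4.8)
The statement splits into two independent claims, and I would treat them separately, in both cases exploiting the orthogonal splitting $\gamma' = \gamma'_\Pi + \gamma'_\perp$ together with the facts that $\partial_\phi$ vanishes on the axis and that, for the naked singularity, $\Delta>0$ everywhere.

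For the first claim, the vanishing of $L$ is immediate: since $\partial_\phi$ is the zero set of the Killing field $\tilde{\partial}_\phi$ on $A$ (Definition \ref{abstract def axis and Eq}), we get $L=\mathbf{g}(\gamma',\partial_\phi)=0$. To obtain $K=0$ I would show that $\gamma'_\perp$ vanishes. Because $\textrm{Im}~\gamma\subset A$ and $A$ is totally geodesic (Lemma \ref{A and Eq closed totally geod subman}), the velocity $\gamma'$ stays tangent to $A$; along the axis $W=\partial_\phi+a\sin^2\theta\,\partial_t=0$ and $V=(r^2+a^2)\partial_t$, so that $T A=\mathrm{span}\{\partial_r,\partial_t\}=\mathrm{span}\{\partial_r,V\}=\Pi$. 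Hence $\gamma'_\perp=0$, and Proposition \ref{expression for K} with $q=\mathbf{g}(\gamma',\gamma')=0$ yields $K=\rho^2\,\mathbf{g}(\gamma'_\perp,\gamma'_\perp)=0$. Equivalently, confinement to $A$ forces $\theta'\equiv 0$, so $\Theta\equiv 0$ along $\gamma$; the expression $\Theta=Q+a^2E^2\cos^2\theta$ valid for $L=0$ then gives $Q+a^2E^2=0$, i.e. $K=Q+(L-aE)^2=Q+a^2E^2=0$.

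For the second claim I would argue by contradiction: suppose a null geodesic $\gamma$ has $K=E=0$. By Corollary \ref{eq principal nulls}, $K=0$ forces $\gamma$ to be principal, i.e. $\gamma'=\gamma'_\Pi\in\Pi=\mathrm{span}\{\partial_r,V\}$; in particular $\mathbb{D}=\mathbf{g}(\gamma',W)=0$. Since $E=0$, Definition \ref{definitions of P and D} gives $\mathbb{D}=L$, so $L=0$, and then $\mathbb{P}=(r^2+a^2)E-La=0$, i.e. $\mathbf{g}(\gamma',V)=0$. Writing $\gamma'=\alpha\,\partial_r+\beta\,V$, orthogonality to the timelike $V$ forces $\beta=0$; nullity then gives $0=\mathbf{g}(\gamma',\gamma')=\alpha^2\,\rho^2/\Delta$, and since $\Delta>0$ and $\rho^2>0$ throughout $K_N$ this forces $\alpha=0$. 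Thus $\gamma'=0$, contradicting that $\gamma$ is a null geodesic.

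The main obstacle is less the algebra than the coordinate degeneracy on the axis in the first claim: there $\partial_\phi$ and $W$ vanish and $\theta$ is a singular coordinate, so the projection formulas must be read with care, which is why I prefer routing $K=0$ through the intrinsic identity of Proposition \ref{expression for K} rather than through the $\Theta$-equation. For the second claim the only delicate point is the final nullity step, and it is precisely here that the naked-singularity hypothesis $\Delta>0$ (equivalently, $\partial_r$ spacelike everywhere) is essential; in the slow-rotating case this step fails where $\Delta\le 0$, consistent with the separate, horizon-based treatment needed there.
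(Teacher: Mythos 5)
Your proof is correct and follows essentially the same route as the paper: $L=0$ from $\partial_\phi\equiv 0$ on $A$, then $K=0$ via $\gamma'_\perp=0$ and Proposition \ref{expression for K}, and for the second claim principality from Corollary \ref{eq principal nulls} combined with $\Delta(r)>0$ (so $\partial_r$ spacelike) to kill both components of $\gamma'$. The only cosmetic difference is that you eliminate the $V$-component via $\mathbb{D}=0\Rightarrow L=0\Rightarrow\mathbb{P}=0$ and the timelikeness of $V$, whereas the paper computes $0=E=c_V\,\Delta(r)$ directly from $\mathbf{g}(V,\partial_t)=-\Delta(r)$; your added justification that $TA=\Pi$ along the axis is a welcome elaboration of the paper's terse claim that $\gamma'_\perp=0$.
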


\begin{proof}
    On $A$, $\partial_\phi\equiv 0$, hence $L=\mathbf{g}(\gamma',\partial_\phi)=0$. Since $\textrm{Im}~\gamma\subset A$, $\gamma'_{\perp}=0$. Hence $K=0$ by Proposition \ref{expression for K} because $\gamma$ is null.\\
    Suppose by contradiction that there exists a null geodesic with $K=E=0$. Since $\gamma$ is principal by Corollary \ref{eq principal nulls}, $\gamma'=c_r \partial_r + c_V V$, for some $c_r,c_V\in\mathbb{R}$. Then $0=E=-\mathbf{g}(\gamma',\partial_t)=-c_V\mathbf{g}(V,\partial_t)=c_V \Delta(r)$ (see p. $60$ of \cite{KBH_book}). Since for Kerr-naked singularity spacetime $\Delta(r)>0$ for all $r$, we must have $c_V=0$, hence $\gamma'=c_r\partial_r$. However, $\partial_r$ is always spacelike in the Kerr-naked singularity spacetime.
\end{proof}
\vspace{0.3cm}
\begin{remark}
    In the slow-rotating and extreme Kerr-star spacetimes, null geodesics with $K=E=0$ exist. For instance, when their energy $E$ vanishes, they correspond to the restphotons, see Lemma $4.2.9$ of \cite{KBH_book}.
\end{remark}
\vspace{0.3cm}

The following propositions about null geodesics are valid in any Kerr spacetime, regardless of the magnitude of the rotation parameter $a$.
\vspace{0.3cm}

\begin{proposition}[\cite{sanzeni_null}, Proposition $4.6$]\label{Q<0 condition}
Let $\gamma$ be a null geodesic with $Q<0$.  Then

\begin{enumerate}
\item $\gamma$ does not intersect $Eq=\{\theta=\pi/2\}$;
\item $a^2E^2> L^2$ and in particular $E\neq 0$.
\end{enumerate}
\end{proposition}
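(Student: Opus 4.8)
The plan is to rely entirely on the constraint $\Theta(\theta)\geq 0$, which holds along every geodesic because the fourth equation of Proposition \ref{differential equations of geodessics} reads $\rho^4(r,\theta)\theta'^2=\Theta(\theta)$ and its left-hand side is non-negative. For a null geodesic $q=\mathbf{g}(\gamma',\gamma')=0$, so the $\theta$-potential specialises to
\[
\Theta(\theta) = Q + \cos^2\theta\left[a^2E^2 - \frac{L^2}{\sin^2\theta}\right].
\]
Everything will be squeezed out of the sign of this expression.

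For part (1) I would argue by contradiction. If $\gamma$ met $Eq=\{\theta=\pi/2\}$ at some parameter $s_0$, then, since $\cos(\pi/2)=0$, evaluating the non-negative quantity there gives $\Theta(\pi/2)=Q$. But $Q<0$ forces $\Theta(\pi/2)<0$, contradicting $\Theta\geq 0$. Hence $\gamma$ never reaches $\theta=\pi/2$, i.e. $\cos^2\theta>0$ along the whole image of $\gamma$.

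For part (2) I would extract $a^2E^2>L^2$ from $\Theta\geq 0$ evaluated at a convenient point: dividing by $\cos^2\theta>0$ (legitimate by part (1)) and then using $\sin^2\theta\leq 1$. Concretely, at such a point $\Theta\geq 0$ together with $Q<0$ gives $\cos^2\theta\,[a^2E^2 - L^2/\sin^2\theta]\geq -Q>0$, so $a^2E^2 - L^2/\sin^2\theta>0$, whence $a^2E^2 > L^2/\sin^2\theta\geq L^2$. The claim $E\neq 0$ then follows at once, since $a\neq 0$ makes $a^2E^2>L^2\geq 0$ impossible when $E=0$.

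The delicate point will be the axis $A=\{\theta=0,\pi\}$, where $\sin\theta=0$ so that the term $L^2/\sin^2\theta$ is not directly meaningful and the division above is not allowed. I would dispatch this with a case split. If $\mathrm{Im}\,\gamma\subset A$, then $K=L=0$ by Proposition \ref{E=L=K=0 proposition}, so $Q=K-(L-aE)^2=-a^2E^2$; then $Q<0$ immediately yields $a^2E^2=-Q>0=L^2$ and $E\neq 0$. Otherwise $\gamma$ possesses a point with $\theta_0\in(0,\pi)$, i.e. $\sin^2\theta_0>0$, and part (1) guarantees $\theta_0\neq\pi/2$ so that $\cos^2\theta_0>0$; applying the estimate of the previous paragraph at that point finishes the argument. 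Apart from this bookkeeping on $A$, I expect the proof to be short and essentially forced by the non-negativity of $\Theta$.
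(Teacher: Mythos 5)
Your proposal is correct and follows essentially the same route as the paper's (i.e.\ the proof of Proposition~4.6 in \cite{sanzeni_null}): both parts are squeezed out of the constraint $\Theta(\theta)\geq 0$ along null geodesics, with part (1) obtained by evaluating $\Theta(\pi/2)=Q<0$ for a contradiction, and part (2) by rearranging $\Theta(\theta_0)\geq 0$ at a point with $\cos^2\theta_0>0$ and using $\sin^2\theta_0\leq 1$. Your explicit case split for geodesics contained in the axis $A$ (where $L=K=0$ forces $Q=-a^2E^2$, so $Q<0$ directly gives $a^2E^2>0=L^2$) is sound and consistent with the paper's treatment of axis-crossing geodesics, for which $L=0$ makes $\Theta$ well defined.
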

\vspace{0.3cm}
\begin{proposition}[\cite{sanzeni_null}, Proposition $4.8$]\label{geod Q<0}
For $Q<0$ null geodesics,  $R(r)$ is convex and has either no roots or two (maybe coinciding) negative roots. 
\end{proposition}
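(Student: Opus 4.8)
The plan is to exploit the fact that a null geodesic has $q=0$, so that the quartic $R$ from Proposition~\ref{differential equations of geodessics} collapses to
\[
R(r) = E^2 r^4 + \mathfrak{X}(E,L,Q)\, r^2 + 2MK(E,L,Q)\, r - a^2 Q,
\]
with leading coefficient $E^2$. By part $(2)$ of Proposition~\ref{Q<0 condition} we have $E\neq 0$, so this is a genuine quartic opening upwards; I would first establish convexity and then locate the roots.

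For convexity I would compute $R''(r) = 12 E^2 r^2 + 2\mathfrak{X}(E,L,Q)$ and argue that $\mathfrak{X}=a^2E^2-L^2-Q>0$. Indeed $-Q>0$ since $Q<0$, while $a^2E^2-L^2>0$ again by part $(2)$ of Proposition~\ref{Q<0 condition}; hence $R''(r)\geq 2\mathfrak{X}>0$ for all $r$, so $R$ is strictly convex. In particular $R'$ is strictly increasing, $R$ attains a unique global minimum, and therefore has at most two real roots (a touching minimum being counted as a double root).

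It then remains to show every root is negative, equivalently that $R$ is positive on $[0,\infty)$. The two facts I would invoke are $R(0) = -a^2 Q > 0$ (again because $Q<0$ and $a\neq 0$) and $R'(0)=2MK(E,L,Q) \geq 0$, where $K\geq 0$ for null geodesics by Corollary~\ref{eq principal nulls} and $M>0$. Since $R'$ is strictly increasing with $R'(0)\geq 0$, its unique zero—the abscissa of the minimum—lies at some $r_*\leq 0$; consequently $R'\geq 0$ on $[0,\infty)$ and $R$ is non-decreasing there, so $R(r)\geq R(0)>0$ for all $r\geq 0$. Thus $R$ admits no non-negative root, and combining this with the at-most-two-roots property of a strictly convex function yields the claimed dichotomy: either no real roots, or exactly two (possibly coinciding) negative roots.

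The argument has no serious obstacle; its only subtlety is recognising which external inputs make it close. The strict positivity of $\mathfrak{X}$, and hence the convexity, rests entirely on the inequality $a^2E^2>L^2$ furnished by Proposition~\ref{Q<0 condition}, while the sign of $R'(0)$ rests on $K\geq 0$ from Corollary~\ref{eq principal nulls}. I would therefore take care to evaluate both $\mathfrak{X}$ and $K$ with $q=0$ throughout, so that no residual $q$-dependence can spoil these sign conclusions.
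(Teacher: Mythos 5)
Your proof is correct, and it takes essentially the same route as the source: the paper itself imports this statement from \cite{sanzeni_null} (Proposition $4.8$) without reproving it, and the argument there likewise obtains strict convexity from $\mathfrak{X}(E,L,Q)=a^2E^2-L^2-Q>0$ (using $a^2E^2>L^2$ from Proposition \ref{Q<0 condition} together with $-Q>0$) and excludes non-negative roots from the signs $R(0)=-a^2Q>0$ and $R'(0)=2MK(E,L,Q)\geq 0$ (via $K\geq 0$ for null geodesics, Corollary \ref{eq principal nulls}). Your monotonicity argument on $[0,\infty)$ is an equally valid packaging of that same sign information, and you correctly flag the two external inputs on which the proof rests.
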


\section{\textbf{Proof of Theorem  \ref{main theorem}}: $|a|>M$ case} \label{section: main theorem}

By contradiction, let $\gamma\colon I\to K_N$ be a closed null geodesic (CNG).
Since the radius function $r\colon K_N\to\mathbb{R}$ is everywhere smooth, the composition $r\circ \gamma$ must have at least two critical points $s_0<s_1$ in each period $[a,a+T)$, i.e.  
$(r\circ \gamma)'(s_0)=(r\circ \gamma)'(s_1)=0$. Since $\rho\colon K_N\to\mathbb{R}$  never vanishes, the differential equation for $r\circ \gamma$
\[
(\rho\circ \gamma)^4[(r\circ \gamma)']^2=R(r\circ \gamma)
\]
implies that $R(r\circ \gamma(s_{0,1}))=0$.  Because of the differential equation,  the geodesic motion must happen in the $r$-region on which $R(r\circ \gamma)\geq 0$.  Further since $R$ is a polynomial in $r$ we can distinguish two cases:
\begin{enumerate}

\item The zeros $r\circ \gamma(s_{0,1})$ of $R$ are simple, i.e. $dR/dr\neq 0$ at these points. Then $r\circ \gamma(s_{0,1})$ are turning points of $r\circ \gamma$, i.e. 
$(r\circ \gamma)'$ changes its sign at $s_0$ and $s_1$.  

\item One of the zeros $r\circ \gamma(s_{0})$ or $r\circ \gamma(s_{1})$ is a higher order zero of $R$. Then $r\circ \gamma$ is constant.
\end{enumerate}
Both facts follow from Proposition \ref{initial conditions and zeroes}.  

\vspace{0.3cm}
\begin{proposition}\label{proposition spacelike foliation}
There are no closed null geodesics strictly contained in the region $\{r\geq 0\}$.
\end{proposition}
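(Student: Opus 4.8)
The plan is to exhibit the level sets of the Boyer--Lindquist time coordinate $t$ as a spacelike foliation of $\{r\ge 0\}$; equivalently, to prove that $t$ is a temporal function there, i.e. that its gradient $\nabla t$ is timelike at every point of $\{r\ge 0\}$. Granting this, the statement follows uniformly in the sign of $E$: along any future-directed causal curve $\gamma$ one has $(t\circ\gamma)'=dt(\gamma')=\mathbf{g}(\nabla t,\gamma')$, and since $\nabla t$ is timelike while $\gamma'$ is a non-zero null vector (the orthogonal complement of a timelike vector is spacelike and hence contains no non-zero causal vector), this inner product never vanishes and keeps a constant sign, $\nabla t$ carrying a fixed time-orientation on the connected region. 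Therefore $t\circ\gamma$ is strictly monotone, which is incompatible with the defining condition $\gamma(a)=\gamma(b)$ of a closed geodesic. In particular no CNG can have its image contained in $\{r\ge 0\}$.

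The only computation needed is the sign of $\mathbf{g}^{-1}(dt,dt)=g^{tt}$. Inverting \eqref{kerr metric} gives $g^{tt}=-\mathcal{A}/(\rho^2\Delta)$, where $\mathcal{A}(r,\theta):=(r^2+a^2)^2-a^2\Delta\sin^2\theta$. Using $\Delta=r^2-2Mr+a^2$ and $\rho^2=r^2+a^2\cos^2\theta$ one rewrites this as $\mathcal{A}=(r^2+a^2)\rho^2+2Ma^2 r\sin^2\theta$. Since in the Kerr-naked singularity case $\Delta>0$ everywhere and $\rho^2>0$ away from $\Sigma$, the sign of $g^{tt}$ is opposite to that of $\mathcal{A}$. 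On $\{r\ge 0\}$ the first summand $(r^2+a^2)\rho^2$ is strictly positive and the second summand $2Ma^2 r\sin^2\theta$ is non-negative, so $\mathcal{A}>0$ and hence $g^{tt}<0$. I would then check the degenerate loci of this estimate: on the axis $\sin\theta=0$ gives $\mathcal{A}=(r^2+a^2)^2>0$, and at $r=0$ away from the ring $\mathcal{A}=a^4\cos^2\theta>0$, the only zero of $\mathcal{A}$ on $\{r\ge0\}$ being the excised ring $\Sigma=\{r=0,\ \theta=\pi/2\}$.

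I expect the crux to be exactly the identity $\mathcal{A}=(r^2+a^2)\rho^2+2Ma^2 r\sin^2\theta$: the potentially sign-changing term is $2Ma^2 r\sin^2\theta$, and $\{r\ge 0\}$ is precisely the locus on which it is non-negative, so that $dt$ stays timelike. For $r<0$ this term can dominate, $\nabla t$ becomes spacelike, and this is the mechanism behind the time-machine region, explaining why the statement must be restricted to $\{r\ge0\}$. A minor point to verify is that $\{r\ge0\}\setminus\Sigma$ is connected (removing the codimension-two ring does not disconnect it), which guarantees a globally consistent time-orientation for $\nabla t$. Finally, the same conclusion appears directly in the geodesic equations \eqref{geodes diff equations}: a short manipulation yields $\rho^2\Delta\,(t'-E)=2Mr\,\mathbb{P}(r)$, so on $\{r\ge 0\}$ the bound $t'\ge E$ holds because $\mathbb{P}(r)=-\mathbf{g}(\gamma',V)>0$ for future-directed null $\gamma'$ (as $V$ is timelike); this already disposes of the case $E\ge 0$, but the temporal-function argument has the advantage of being uniform in $E$ and is the natural incarnation of the \emph{spacelike foliation} named in the statement.
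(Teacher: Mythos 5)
Your proof is correct and takes essentially the paper's approach: both arguments exhibit the $t$-level sets as a spacelike foliation of $\{r\geq 0\}$, the paper by checking that $\partial_r,\partial_\theta,\partial_\phi$ are spacelike and mutually orthogonal there (with a separate remark on the axis), you dually by computing $g^{tt}=-\mathcal{A}/(\rho^2\Delta)<0$, i.e.\ that $\nabla t$ is timelike — the same mechanism, since the sign-critical term $2Ma^2r\sin^2\theta$ in your $\mathcal{A}$ is exactly the $2Mra^2\sin^2\theta/\rho^2$ term making $g_{\phi\phi}>0$ on $\{r\geq 0\}$. The concluding steps are interchangeable variants: you use strict monotonicity of $t\circ\gamma$, while the paper notes that closedness forces a critical point of $t\circ\gamma$, i.e.\ a null vector tangent to a spacelike hypersurface.
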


\begin{proof}
First we claim that the hypersurfaces $\mathcal{N}_t:=\{t=\textrm{const}\}\cap\{r\geq 0\}\subset K_N$ are spacelike.  Indeed,  if $p\in\mathcal{N}_t\setminus A$,  where $A=\{\theta=0,\pi\}$,  then $T_p\mathcal{N}_t$ is spanned by $\partial_r,\partial_\theta,\partial_\phi$ which are spacelike and orthogonal to each other.  If $p\in A\cap\mathcal{N}_t$,  then $p=(t,r,q)$ with $q=(0,0,\pm 1)\in S^2\subset \mathbb{R}^3$,  and we may replace $\partial_\theta,\partial_\phi$ by any basis of $T_qS^2$.  Suppose by contradiction that there exists a CNG $\gamma$ in $\{r\geq 0\}$.  Then there would exist $t_0,s_0$ such that $\gamma'(s_0)\in T_{\gamma(s_0)}\mathcal{N}_{t_0}$.  This is a contradiction since $\gamma'(s_0)$ is null.
\end{proof}

\begin{remark}
    Proposition \ref{proposition spacelike foliation} specifically holds for the Kerr-naked singularity spacetime, since the coordinate vector field $\partial_r$ is everywhere spacelike if $\Delta(r)>0$ for all $r$ (compare with Proposition $5.2$ in \cite{sanzeni_null}).
\end{remark}

\hspace{1cm}

\subsection*{Null geodesics in the axis $A$}

First, we rule out CNGs entirely contained in the axis $A=\{\theta=0,\pi\}$.
\vspace{0.3cm}
\begin{proposition} \label{no closed geodesics in axis}
There are no CNGs which are tangent at some point to $A=\{\theta=0,\pi\}$.  In particular,  there are no CNGs entirely contained in $A$.
\end{proposition}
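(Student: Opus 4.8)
The plan is to exploit the fact that the axis $A=\{\theta=0,\pi\}$ is a closed totally geodesic submanifold (Lemma~\ref{A and Eq closed totally geod subman}). A geodesic that is tangent to a totally geodesic submanifold at one point must remain inside it for all parameter values, so it suffices to analyse geodesics that lie entirely in $A$ and then derive a contradiction with closedness. By Proposition~\ref{E=L=K=0 proposition}, any null geodesic $\gamma$ with $\mathrm{Im}\,\gamma\subset A$ satisfies $K=L=0$, and the same proposition forbids $K=E=0$; hence such a $\gamma$ must have $E\neq 0$. This is the decisive structural input.

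First I would set up the reduced dynamics on $A$. Since $L=0$ and the motion is confined to $\theta\in\{0,\pi\}$, the only non-trivial coordinate is $r$, governed by $\rho^4 (r')^2 = R(r)$. With $K=L=0$ the Carter constant is $Q=K-(L-aE)^2=-a^2E^2<0$, so I can invoke Proposition~\ref{geod Q<0}: for these $Q<0$ null geodesics $R(r)$ is convex and has either no real roots or two (possibly coinciding) negative roots. Specializing the quartic, with $q=0$ (null) and $L=0$ one gets
\[
R(r)=E^2 r^4 + \big(a^2E^2 - Q\big)r^2 - a^2 Q = E^2 r^4 + 2a^2E^2 r^2 + a^4 E^2 = E^2\,(r^2+a^2)^2,
\]
using $Q=-a^2E^2$. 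Since $E\neq0$ and $(r^2+a^2)^2>0$ for all real $r$, this shows $R(r)>0$ everywhere, so $r'$ never vanishes along $\gamma$.

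The contradiction with closedness then follows immediately: if $\gamma$ were a CNG, then by the opening argument of \S\ref{section: main theorem} the function $r\circ\gamma$ would have a critical point, forcing $R(r\circ\gamma(s_0))=0$ somewhere. But we have just shown $R>0$ identically on $A$ for such geodesics, so $r\circ\gamma$ is strictly monotone and $\gamma$ cannot return to its starting point. Hence no CNG lies in $A$, and by the totally-geodesic tangency argument no CNG is tangent to $A$ at any point either.

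I expect the main obstacle to be the justification that tangency to $A$ at a single point forces the whole geodesic into $A$; this is exactly where Lemma~\ref{A and Eq closed totally geod subman} (the axis is totally geodesic and closed) is needed, together with the uniqueness of geodesics from given initial data. A secondary subtlety is handling the two components $\theta=0$ and $\theta=\pi$ and the coordinate degeneracy of $\partial_\phi$ on $A$; invoking the abstract, coordinate-free description of the axis in Definition~\ref{abstract def axis and Eq} as the zero set of the Killing field $\tilde\partial_\phi$ avoids any spurious issue with the Boyer--Lindquist chart, and the explicit evaluation $R(r)=E^2(r^2+a^2)^2$ then closes the argument cleanly.
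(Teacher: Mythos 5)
Your proposal is correct and follows essentially the same route as the paper's proof: reduce to geodesics lying in $A$ via the totally geodesic property (Lemma~\ref{A and Eq closed totally geod subman}), use Proposition~\ref{E=L=K=0 proposition} to get $K=L=0$ and $E\neq 0$, and conclude from the explicit formula $R(r)=E^2(r^2+a^2)^2>0$ that $r\circ\gamma$ has no critical points, so $\gamma$ cannot close. Your detour through $Q=-a^2E^2<0$ and Proposition~\ref{geod Q<0} is harmless but superfluous, since the explicit evaluation of $R(r)$ (with the linear term $2MKr$ vanishing because $K=0$) already settles the matter exactly as in the paper.
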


\begin{proof}
First of all,  $A=\{\theta=0,\pi\}$ is a $2$-dimensional closed totally geodesic submanifold by Lemma \ref{A and Eq closed totally geod subman}. Hence, if a geodesic $\gamma$ is tangent to $A$ at some point,   it will always lie on $A$.  By Proposition \ref{E=L=K=0 proposition},  if $\textrm{Im}~\gamma\subset A$,  then $L=K=0$.  Hence, the geodesic must have $E\neq 0$, since there are no null geodesics with $K=E=0$ by Proposition \ref{E=L=K=0 proposition}.  Then using Eq.  \eqref{geodes diff equations},  we have

\begin{align*}
R(r)=E^2(r^2+a^2)^2>0.
\end{align*}
So $R(r)$ has no zeroes, and therefore the geodesic cannot be closed.

\end{proof}

\subsection*{Steps of the proof for other cases}\label{steps of other cases}

The proof splits into two main cases  $E=0$ and $E\neq 0$. When $E\neq 0$,  we analyse the three subcases $Q=0$,  $Q>0$ and $Q<0$.\\

\subsection*{Case $E=0$}\label{section $E=0$}

\vspace{0.3cm}

From Proposition \ref{differential equations of geodessics},  for null $(q=0)$ geodesics we have

\begin{align}
\label{R equation E=0}R(r)&=\mathfrak{X}(0,L,Q)r^2+2MK(0,L,Q)r-a^2Q\geq 0,\\
\label{theta equation E=0}\Theta(\theta)&=Q-\frac{\cos^2\theta}{\sin^2\theta}L^2\geq 0,
\end{align}
with $\mathfrak{X}(0,L,Q)=-(L^2+Q)$ and $K(0,L,Q)=L^2+Q$,  so  $\mathfrak{X}=-K$.  From Eq. \eqref{theta equation E=0}, we must have $Q\geq 0$,  hence $K(0,L,Q)=L^2+Q\geq 0$,  as already known from Corollary \ref{eq principal nulls}.  By Proposition \ref{E=L=K=0 proposition}, we can assume $K(0,L,Q)>0$. We may also assume that the discriminant of \eqref{R equation E=0} is 

\[\textrm{dis}=4M^2(L^2+Q)^2-4a^2Q(L^2+Q)\geq 0,
\] 
as otherwise the geodesic cannot have bounded $r$-behaviour. Therefore, the two (maybe coinciding) roots of $R(r)$ are

\begin{align*}
M\pm\sqrt{M^2-\frac{a^2Q}{L^2+Q}}\geq 0,
\end{align*}
since $Q/(L^2+Q)\geq 0$.  Hence $\gamma$ is either bouncing between two non-negative simple roots or has a fixed $r$-coordinate at positive radius $r=M$. Therefore, it cannot be closed by Proposition \ref{proposition spacelike foliation}.

\subsection*{Case $E\neq 0$}\label{section $E div 0$}
\vspace{0.3cm}
\subsubsection*{Subcase $Q=0$} \label{case Q=0}

We have

\begin{align}
R(r)&=E^2r^4+(a^2E^2-L^2)r^2+2M(L-aE)^2r\geq 0,\\
\Theta(\theta)&=\cos^2\theta \bigg(a^2E^2-\frac{L^2}{\sin^2\theta}\bigg)\geq 0. \label{theta eq E non 0, Q=0}
\end{align}

\begin{proposition}[\cite{sanzeni_null}, Proposition $5.6$]\label{proposition r-bounded eq geodesics}
All the null geodesics with $E\neq 0,\; Q=0$ which have bounded $r$-behaviour lie in $Eq=\{\theta=\pi/2\}$.
\end{proposition}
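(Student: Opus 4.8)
The plan is to reduce everything to the sign of the single quantity $a^2E^2-L^2$. I will show that a bounded $r$-orbit forces $a^2E^2\le L^2$, and that this same inequality, fed into the polar equation, pins the geodesic to the equator. So the argument is a contrapositive on the radial side followed by a direct deduction on the polar side.

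First I would factor the radial polynomial as $R(r)=r\,P(r)$ with $P(r)=E^2r^3+(a^2E^2-L^2)r+2M(L-aE)^2$, so that $r=0$ is always a root. The decisive computation is $P'(r)=3E^2r^2+(a^2E^2-L^2)$: if $a^2E^2>L^2$ this is strictly positive, whence $P$ is strictly increasing and has a unique real root $r_*$, which is moreover strictly negative since $P(0)=2M(L-aE)^2>0$ (note $L\ne aE$ here, as $a^2E^2>L^2$). Reading off the sign of $R=rP$ then gives $R>0$ on $(-\infty,r_*)$, $R<0$ on $(r_*,0)$, and $R>0$ on $(0,\infty)$, so the admissible set $\{R\ge 0\}$ is the union of the two \emph{unbounded} rays $(-\infty,r_*]$ and $[0,\infty)$, with no bounded positivity well and no multiple root at which $r$ could freeze. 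By the turning-point dichotomy of Proposition \ref{initial conditions and zeroes}, this excludes both an oscillation trapped between two simple turning points and a constant-$r$ orbit; hence $a^2E^2>L^2$ forces unbounded $r$-behaviour. Contrapositively, bounded $r$-behaviour requires $a^2E^2\le L^2$.

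Second I would insert $a^2E^2\le L^2$ into the polar equation $\Theta(\theta)=\cos^2\theta\,\bigl(a^2E^2-L^2/\sin^2\theta\bigr)\ge 0$. Since $a\ne 0$ and $E\ne 0$, the bound $a^2E^2\le L^2$ already forces $L\ne 0$; and for any $\theta\ne\pi/2$ one has $\cos^2\theta>0$ together with $a^2E^2\sin^2\theta<a^2E^2\le L^2$ (strict because $\sin^2\theta<1$), i.e. $a^2E^2-L^2/\sin^2\theta<0$, so $\Theta(\theta)<0$ there. As $\Theta$ must remain non-negative along the geodesic and vanishes precisely at $\theta=\pi/2$, the motion is confined to the equator, which is the claim.

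The second step is routine; the step I expect to carry the weight is the radial sign analysis, where the whole mechanism rests on $a^2E^2-L^2$ simultaneously controlling the monotonicity of $P$ (hence the root count of $R$) and supplying the very inequality needed in the $\Theta$-equation. The one point demanding care is ruling out a degenerate bounded orbit sneaking in through a multiple root: one checks $R'(0)=2M(L-aE)^2>0$ and $R'(r_*)=r_*P'(r_*)<0$ under $a^2E^2>L^2$, so both roots of $R$ are simple and Proposition \ref{initial conditions and zeroes} genuinely forbids a frozen $r$-coordinate.
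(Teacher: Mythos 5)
Your proposal is correct and follows essentially the same mechanism as the paper's proof: the sign of $a^2E^2-L^2$ simultaneously governs the root structure of $R(r)$ (with $r=0$ always a root since $Q=0$) and the sign of $\Theta(\theta)$ off the equator. The only difference is cosmetic --- you argue contrapositively and control the roots by factoring $R=rP$ with $P$ strictly monotone, whereas the paper argues by contradiction and uses convexity of $R$ via $R''\geq 0$ to force a multiple root at $r=0$ (hence $L=aE$) before deriving the same contradiction $\Theta<0$.
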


\vspace{0.3cm}

By Proposition \ref{proposition r-bounded eq geodesics},  we can suppose $\theta(s)=\pi/2$ for every $s$.  Since $R'(0)=2M(L-aE)^2$,   if $L\neq aE$  geodesics with initial position in $\{r<0\}$ are constrained in this region and cannot reach $\{r\geq 0\}$.  Hence, there are no $r$-bounded null geodesics in the region $r<0$ by  Lemma $4.14.2$ \cite{KBH_book}.  Since the geodesics are constrained in $Eq$,  $r=0$ cannot be an $r$-turning point.  So if geodesics have bounded $r$-behaviour in $\{r>0\}$,  they cannot be closed by Proposition \ref{proposition spacelike foliation}.  If instead $L=aE$,  then $R(r)=E^2r^4$.  Then the only possible bounded $r$-behaviour would be $r(s)\equiv 0$, which cannot happen as otherwise the geodesic would lie on the ring singularity $\Sigma=\{r=0,\theta=\pi/2\}$.

\vspace{0.3cm}

\subsubsection*{Subcase $Q>0$} \label{case Q>0} From Corollary \ref{eq principal nulls},  we know that for null geodesics $K=Q+(L-aE)^2\geq 0$.  Hence if $Q>0$,  then $K>0.$ Let us again consider $R(r)=E^2r^4+\mathfrak{X}(E,L,Q)r^2+2MK(E,L,Q)r-a^2Q$.\\
\noindent Since $R(0)=-a^2Q<0$,  if a null geodesic has bounded $r$-behaviour  either it must be strictly contained in $\{r<0\}$ or strictly contained in $\{r>0\}$,  since we must have $R(r)\geq 0$.  Bounded $r$-behaviour in $\{r<0\}$ is impossible in this subcase.  Indeed,  the signs of the coefficients of $R(-r)$ are $+\; \textrm{sign}(\mathfrak{X})\; -\; -$,  so for every sign of $\mathfrak{X}$,  there would be only one change of sign,  hence there can be only one single real negative root by the \textit{``Descartes' rule of signs"}.  If instead the geodesics have bounded $r$-behaviour in the region $\{r>0\}$,  they cannot be closed by Proposition \ref{proposition spacelike foliation}.

\vspace{0.3cm}
\subsubsection*{Subcase $Q<0$} \label{case Q<0}

To rule out the closeness of such geodesics, we can follow the proof in \S $5.5.3$, \cite{sanzeni_null} which does not rely on the magnitude of the parameter $a$. We summarise here the main arguments. First, one considers the only possible bounded $r$-behaviour of null geodesics with $Q<0$, namely spherical geodesics at negative radii, i.e. $(r\circ\gamma)(s)=\textrm{const}<0$ for all $s$ (see Proposition \ref{geod Q<0}). Using the constancy of the $r$-coordinate, the $\theta$-differential equation in Proposition \ref{differential equations of geodessics} implies that the $\theta$-motion must be either constant or periodic. Closeness of null geodesics with $(\theta\circ\gamma)(s)=\textrm{const}$ for all $s$ is excluded in Proposition $5.7$ of \cite{sanzeni_null}. The possible periodic $\theta$-motions are discussed in Proposition $5.9$ (see Figures $11,12,13,14$ of \cite{sanzeni_null}). Then we parametrise the angular momentum and the Carter constant of the spherical null geodesics with negative Carter constant by the fixed negative $r$-coordinate of the geodesic, see Proposition $5.10$ of \cite{sanzeni_null}. Finally, we compute the variation of the $t$-coordinate over a full $\theta$-oscillation, as just a function of the fixed $r$-coordinate, see Eq. ($29$) of \cite{sanzeni_null}. Using relations between elliptic integrals and hypergeometric functions, we prove that such $t$-variation is strictly positive for every constant negative radius (see discussion following Eq. ($29$) of \cite{sanzeni_null}).

\vspace{0.3cm}

\section{Proof of Theorem \ref{main theorem}: $|a|=M$ case}\label{extreme case section}

The extreme Kerr-star spacetime is a Kerr-star spacetime as in Definition $2.12$ of \cite{sanzeni_null} with $|a|=M$. Under this assumption the function $\Delta(r)$ in the metric tensor \eqref{kerr metric} (also shown in Remark $2.1$ of \cite{sanzeni_null}) vanishes only at $r=r_{\mathscr{H}}:=M$. Therefore we can think of it as the Kerr-star spacetime with only two Boyer--Lindquist (BL) blocks: $\{r<r_{\mathscr{H}}\}$, corresponding to the vicious BL block $III$, and $\{r>r_{\mathscr{H}}\}$, corresponding to the causal BL block $I$. The null hypersurface $\{r=r_{\mathscr{H}}\}$ is hence the only horizon of the spacetime.
Since the metric tensor \eqref{kerr metric} expressed in BL coordinates is not defined on the horizon, one has to introduce a new set of coordinates which cover the latter, the Kerr-star coordinates (see Definition $2.9$ of \cite{sanzeni_null}). Similar to the slow-rotating ($|a|<M$) case, the extreme Kerr-star ($|a|=M$) spacetime features both a horizon and a causal region. Therefore, the proof of the nonexistence of closed null geodesics presented in \S $5$ of \cite{sanzeni_null} can also be applied to the extreme Kerr-star spacetime, with appropriate modifications.

\vspace{0.3cm}

To prove Theorem \ref{main theorem} in the case \(|a| = M\), one must modify certain arguments presented in \cite{sanzeni_null}, as outlined below. First, instead of relying on Corollary $2.17$ from \cite{sanzeni_null}, one can utilize the causality of the region $\{r\geq r_{\mathscr{H}}\}$. Additionally, in Lemma $5.1$, Proposition $5.2$ and Proposition $5.3$ of \cite{sanzeni_null}, the Cauchy horizon $r=r_{-}$ should be replaced with the only horizon present in the extreme case, namely $r=r_{\mathscr{H}}$. It is important to note that Proposition $4.5$ of \cite{sanzeni_null} remains valid in the extreme case as well (see also Proposition \ref{E=L=K=0 proposition} for comparison). Therefore, with the above changes the proof strategy laid out in \S $5$ of \cite{sanzeni_null} can be followed, with a single necessary modification: the treatment of the case $E=0$ with $K(0,L,Q)=Q+L^2>0$ (refer to \S$5.4$ of \cite{sanzeni_null}). In this case, one must have $Q\geq 0$. For this class of geodesics, a bounded $ r$-behaviour is only possible if $L\neq 0$. Otherwise, if $L=0, ~Q>0$, the associated polynomial becomes $R(r)=-Q\Delta(r)$ and the only motion corresponding to a bounded $r$-coordinate is a spherical one at $r=r_{\mathscr{H}}=M$, namely a restphoton. However by Proposition $4.5$ of \cite{sanzeni_null}, restphotons are uniquely identified by $E=L=K(E,L,Q)=0$.

Indeed, if $L\neq 0$ and $Q\geq 0$, the discriminant of $R(r)$ is $\textrm{dis}=4M^2(L^2+Q)^2-4a^2Q(L^2+Q)>0$ since $|a|=M\neq 0,\,L^2+Q>Q\geq 0$. Hence if $E=0,\,K(0,L,Q)>0,\,L\neq 0$, the geodesics describe $r$-bouncing orbits between two simple roots of the associated polynomial $R(r)$: one root lies below $r=r_{\mathscr{H}}=M$, and the other lies above it. However, Proposition $5.3$ of \cite{sanzeni_null} can be invoked to exclude the closeness of such geodesics.

\vspace{0.3cm}

\section{Conclusions}\label{sec5}

We have proved the nonexistence of closed null geodesics in both the fast ($|a|>M$) and extreme ($|a|=M$) Kerr spacetimes, analytically extended in the negative radial region. The proof follows analogous strategies to those used in the slow-rotating Kerr spacetime \cite{sanzeni_null}, suitably adapted for the other cases. Combining Theorem \ref{main theorem} with the result in \cite{sanzeni_null}, we can state that for any rotation parameter $a$, the Kerr spacetimes do not admit closed null geodesics.

The existence of closed timelike geodesics in both fast-rotating and extreme cases remains an open problem. For instance, 
the nonexistence proof for timelike geodesics given in \cite{sanzeni_timelike} cannot be applied to the fast-rotating case. To exclude the possibility of closed spherical timelike geodesics with negative Carter constant in the slow-rotating case, a bound on the $r$-coordinate of admissible closed geodesics is employed (see Proposition $5.7$ of \cite{sanzeni_timelike}): closed geodesics may have only existed at radii $-M<r<0$. In the proof by contradiction it is therefore assumed
that $-M<r<0$ in establishing the positivity of the variation of the $t$-coordinate over a full $\theta$-oscillation for both classes of spherical timelike geodesics considered in Proposition $5.12$ of \cite{sanzeni_timelike} (see also Lemma $5.13$ and Proposition $5.17$ in \cite{sanzeni_timelike}, which assume $-M<r<0$). However, the same $r$-bound from Proposition $5.7$ of \cite{sanzeni_timelike} cannot be guaranteed when $|a|>M$, since the coordinate vector field $\partial_\phi$ becomes spacelike when $r\leq -|a|<-M$ (see Lemma $2.4.9$ of \cite{KBH_book}).  Therefore, closed timelike geodesics can only be ruled out if $r\leq -|a|$.
As a result, it remains unclear how to rule out the possibility of closed spherical timelike geodesics with negative Carter constant at fixed negative $r$-coordinate in the range $-|a|<r<-M$.

\backmatter
\bmhead*{Acknowledgement} We thank Stefan Suhr for reading and commenting on the manuscript.
\bmhead*{Fundings} This research was funded in part by the Deutsche Forschungsgemeinschaft (DFG, German Research Foundation)  [Project-ID 281071066 – TRR 191] and by the Austrian Science Fund (FWF) [Grant DOI \href{https://www.fwf.ac.at/en/research-radar/10.55776/EFP6}{10.55776/EFP6}]. For open access purposes, the authors have applied a CC BY public copyright license to any author-accepted manuscript version arising from this submission.

\bmhead*{Data availability} Data sharing is not applicable to this article as no datasets were generated or analysed during the current study.

\bmhead*{Conflict of interest} The authors have no competing interests to declare that are relevant to the content of this article.

\bibliography{main}
\end{document}